\newtheorem{theorem}{Theorem}[section]
\newtheorem{lemma}[theorem]{Lemma}
\newtheorem{remark}[theorem]{Remark}
\newtheorem{corollary}[theorem]{Corollary}
\newtheorem{observation}{Observation}
\newtheorem{definition}{Definition}
\newtheorem{example}{Example}
\title{Rank of weighted digraphs with blocks\thanks{This work is supported by the Joint NSFC-ISF Research Program (jointly funded by
the National Natural Science Foundation of China and the Israel Science Foundation (Nos.
11561141001, 2219/15).}}
\author{Ranveer Singh\thanks{ Technion-Israel Institute of Technology, Haifa 32000, Israel. Email: \texttt{singh@technion.ac.il} (Corresponding author)}\\
  Swarup Kumar Panda \thanks{Technion-Israel Institute of Technology, Haifa 32000, Israel. Email: \texttt{panda.iitg@gmail.com }}\\
Naomi Shaked-Monderer\thanks{The Max Stern Yezreel Valley College, Yezreel Valley 19300, Israel. Email: \texttt{nomi@technion.ac.il}}\\
Abraham Berman\thanks{Technion-Israel Institute of Technology, Haifa 32000, Israel. Email: \texttt{berman@technion.ac.il}}}
\begin{document}

        \maketitle

\begin{abstract}
Let $G$ be a digraph and $r(G)$ be its rank. Many interesting results on the rank of an undirected graph appear in the literature, but not much information about
the rank of a digraph is available. In this article, we study the rank of a digraph using the ranks of its blocks.  In particular, we define classes of digraphs, namely $r_2$-digraph, and $r_0$-digraph, for which the rank can be exactly determined in terms of the ranks of subdigraphs of the blocks. Furthermore, the rank of directed trees, simple biblock graphs, and some simple block graphs are studied.
\end{abstract}

\noindent
\emph{Keywords:} $r_2$-digraphs, $r_0$-digraphs, tree digraphs, block graphs, biblock graphs\\
\emph{AMS Subject Classifications.} 15A15, 15A18, 05C50

\section{Introduction}
A well-known problem, proposed in 1957 by Collatz and Sinogowitz, is to characterize graphs with positive
nullity \cite{von1957spektren}. Nullity of graphs is applicable in various branches of science, in particular, quantum chemistry, H\"{u}ckel
molecular orbital theory \cite{gutman2011nullity}, \cite{lee1994chemical} and social network theory \cite{leskovec2010signed}. For more detail on the applications see \cite{gutman2011nullity}. Many significant results on the nullity of undirected graphs are available in the literature, see \cite{cvetkovic1980spectra, fiorini2005trees, gutman2001nullity, hu2008nullity, nath2007null, hu2008nullity, bapat2011note, berman2008upper, xuezhong2005nullity}.

Recently in \cite{gong2012nullity, gong2010nullity, fan2009nullity} nullity of undirected graphs was studied using cut-vertices and connected components. The results are used to calculate the nullity of line graphs of undirected graphs. Apart from the connected components, it turns out that the blocks are also interesting subgraphs of a graph, which can be utilized to know its determinant  \cite{singh2017characteristic, singh2017mathcal}. This motivates us to utilize blocks to determine the nullity or rank (for a square matrix $A$ of order $n$ nullity is equal to $n-r(A)$) of digraphs.

We construct some interesting classes of digraphs, in particular $r_2$-digraph and $r_0$-digraph, where the rank can be determined by the ranks of subdigraphs of the blocks. Furthermore, the ranks of directed trees, simple biblock graphs, and block graphs are determined.
\subsection{Notations and Preliminaries}
 A \emph{digraph} $G=(V(G), E(G))$ consists of a vertex set $V(G)$ and an arc set $E(G)\subseteq V(G) \times V(G)$. An arc $(u,u)$ is called a \emph{loop at the vertex $u$}.  A \emph{weighted digraph} is a digraph equipped with a weight function $f: E(G) \rightarrow \mathbb{C}$. 
    If $V(G) = \emptyset$ then, the digraph $G$ is called a \emph{null graph}. A \emph{subdigraph} of $G$ is a digraph $H$ such that $V(H) \subseteq V(G)$ and $E(H) \subseteq E(G)$. A subdigraph $H$ is an \emph{induced subdigraph} of $G$ if $u, v \in V(H)$ and $(u,v) \in E(G)$ implies $(u,v) \in E(H)$. 
    
A \emph{simple graph} $G=(V(G), E(G)),$ consists of a vertex set $V(G),$ and edge set $E(G),$ where each edge is an unordered pair $(u,v)$ of vertices, $u\neq v.$ With each directed graph $G$ we associate an underlying simple graph, with the same vertex set, and an edge (undirected) between two distinct vertices $u$ and $v$ if only if $(u,v)$ or $(v,u)$ is an arc of $G$. A \emph{cut-vertex} in a simple graph is a vertex whose removal increases the number of connected components. A \emph{block} in a simple graph is a maximal connected induced subgraph with no cut-vertex. A vertex in a digraph is a cut-vertex if it is a cut-vertex of its underlying simple graph. Similarly a subdigraph is a block of the digraph if it corresponds to a block in the underlying simple graph. A graph or digraph with no cut-vertex is called \emph{nonseparable}. A block of a digraph $G$ is \emph{pendant block} if it contains at most one cut-vertex of $G$.  In Figure \ref{bal} a digraph with seven blocks is presented. These blocks are the induced subdigraphs on the vertex-sets $\{v_1,v_2,v_3\}, \{v_1,v_4,v_5\}, \{v_1,v_6,v_7,v_8\}, \{v_6,v_{11},v_{12}\}, \{v_8,v_9,v_{10}\}, \{v_2,v_{14}\}, \{v_4,v_{13}\},$ respectively.   
  
For every weighted digraph $G$ there corresponds a matrix $A(G)$ with $a_{uv}=f(u,v)$ for every $(u,v)\in E(G)$ (in particular, $a_{uu}$ is the weight of the loop at $u,$ if it exists), and $a_{uv}=0$ otherwise.

Let $cs(M)$ denote the column space of a matrix $M$, that is, the space of all linear combinations of the columns of $M$. Similarly, $rs(M)$ denotes the row space of $M$.

Appending a row (column) to a matrix $M$ keeps its rank unchanged if the row (column) is in the row space (column space) of $M$, otherwise, the rank is increased by 1. Thus, increasing the size of a square matrix by one increases its rank by at most 2. It leads to the following observation.

\begin{observation} \label{ob1}
 Let $G$ be a digraph with a cut-vertex $v$. Let $H$ be a nonempty induced  subdigraph which includes $v$ such that there is no arc $(p,q)$ or $(q,p)$, where $p\in V(H\setminus v)$ and  $ q \in V(G\setminus H)$. Then one of the following three cases can occur.\\
\emph{CASE I:} $r(H)=r(H\setminus v)+2$.\\
\emph{CASE II:} $r(H)=r(H\setminus v)+0$.\\
\emph{CASE III:} $r(H)=r(H\setminus v)+1$.

\end{observation}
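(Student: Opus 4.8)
The plan is to read the statement entirely through adjacency matrices, using the appending principle quoted immediately before the observation. Note first that the hypotheses on $v$ being a cut-vertex and on the absence of arcs between $V(H\setminus v)$ and $V(G\setminus H)$ play no role in bounding the difference $r(H)-r(H\setminus v)$: they serve only to guarantee that $H\setminus v$ is a genuine induced subdigraph, so that its matrix is the principal submatrix of $A(H)$ obtained by deleting exactly the row and column indexed by $v$. The real content is therefore a purely linear-algebraic fact about what happens to the rank of a square matrix when one simultaneously deletes one row and one column.

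First I would order the vertices of $H$ so that $v$ is last, and write
\[
A(H)=\begin{pmatrix} B & c \\ r^{T} & a_{vv} \end{pmatrix},
\]
where $B=A(H\setminus v)$ is the principal submatrix on $V(H\setminus v)$, the column $c$ records the weights of arcs from $V(H\setminus v)$ into $v$, the row $r^{T}$ records the weights of arcs from $v$ into $V(H\setminus v)$, and $a_{vv}$ is the weight of a loop at $v$ (taken to be $0$ if no loop exists). Thus $r(H\setminus v)=r(B)$ and $r(H)=r(A(H))$, and it suffices to compare these two ranks.

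Next I would build $A(H)$ from $B$ by two single-line augmentations and apply the appending principle to each. Appending the column $c$ to $B$ gives $M_{1}=\begin{pmatrix} B & c \end{pmatrix}$, and by the principle $r(M_{1})=r(B)$ or $r(M_{1})=r(B)+1$ according to whether $c\in cs(B)$. Appending the row $(\,r^{T}\ a_{vv}\,)$ to $M_{1}$ produces $A(H)$, and again $r(A(H))=r(M_{1})$ or $r(A(H))=r(M_{1})+1$ according to whether that row lies in $rs(M_{1})$. Since neither augmentation can lower the rank, and each raises it by at most $1$, I obtain
\[
r(B)\le r(A(H))\le r(B)+2,
\]
that is, $r(H)-r(H\setminus v)\in\{0,1,2\}$, which is precisely the trichotomy of Cases I--III.

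The argument is essentially bookkeeping, so the only points needing care are the boundary cases and the exact matching of the combinatorial deletion with the matrix operation. I would verify that deleting $v$ from the induced subdigraph $H$ removes exactly one row and one column of $A(H)$ — this is the only place the induced-subdigraph hypothesis is used, since it guarantees $A(H\setminus v)=B$ is a principal submatrix rather than a matrix with altered entries. I would also record the degenerate case $V(H)=\{v\}$, where $B$ is the empty matrix of rank $0$ and $A(H)=(a_{vv})$ has rank $0$ or $1$, consistent with the bound. I do not expect any genuine obstacle here; the statement is a direct consequence of applying the appending principle twice.
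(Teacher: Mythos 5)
Your argument is correct and is essentially the paper's own justification: the paper derives the observation directly from the preceding remark that appending a single row or column changes the rank by at most one, applied to the bordered form of $A(H)$ with $A(H\setminus v)$ as principal submatrix. Your additional remarks (that the cut-vertex and no-arc hypotheses are not needed for the bound itself, and the degenerate case $V(H)=\{v\}$) are accurate but do not change the substance.
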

  Some typical examples of simple graphs and weighted digraphs for the above three cases are the following: 

\begin{example}

\emph{CASE I:} \begin{enumerate}
\item Simpe graph: Consider a nonsingular simple bipartite graph $H$ of order $n$. For example, a simple bipartite graph with a unique perfect matching is nonsingular. Note that $n$ has to be even as the eigenvalues of a bipartite graph are symmetric with respect to 0, thus if $n$ is odd then the bipartite graph is singular.  By Cauchy's interlacing eiegnvalues property,  if we remove any vertex $v$ from $H$ then the resulting bipartite graph will be singular having exactly one zero eigenvalue. Hence, $r(H\setminus v)=n-2.$ That is, $r(H)=r(H \setminus v)+2.$
\item Digraph: Consider a digraph $H$ on two vertices $u,v$ without loops. Let the arcs $(u,v), (v,u)$ have nonzero weights $\alpha_1, \alpha_2,$ respectively. Then $r(H)=r(H \setminus v)+2.$. 
\end{enumerate}

\emph{CASE II:} 
\begin{enumerate}
\item Simple graph: Consider a simple complete bipartite graph $K_{m,n}, m\ge1, n\ge1$. The nullity of $K_{m,n}$ is $m+n-2$. For $m>1,n>1$, if $H=K_{m,n},$ then $r(H)=2.$ If we remove a vertex $v$ from $H$, the rank will still be 2. That is, $r(H)=r(H\setminus v)+0=2.$

\item Digraph: Consider a digraph $H$ on two vertices $u,v$, with loop on $u$ but not on $v$. Let $H$ has only one arc $(u,v)$. Then $r(H)=r(H \setminus v)=1.$
\end{enumerate}

\emph{CASE III:} 
\begin{enumerate}
\item Simple graph: A simple complete graph $K_n, n\geq 2$ is nonsingular. Thus if $H=K_n, n\geq 3,$ then removal of any vertex $v$ results in a decrease of the rank by 1. That is, $r(H)=r(H\setminus v)+1.$

\item Digraph: Consider a digraph $H$ on two vertices $u,v$, without loops, and exactly one arc $(u,v)$. Then $r(H)=r(H \setminus v)+1.$
\end{enumerate}

\end{example} 

In Section \ref{secr2} we discuss CASE I and we define a new family of digraphs, $r_2$-digraphs, to obtain results on ranks of $r_2$-digraphs. In Section  \ref{secr0} we discuss CASE II, we define a new family of digraphs, $r_0$-digraphs, to obtain results on their ranks. In Section \ref{secr1} we discuss CASE III, and give some partial results, which includes results on ranks of block graphs discussed in \cite{singh2018nonsingular}.

\begin{theorem}\label{hy}
Consider the following square matrix
$$M=\begin{bmatrix}
\alpha & x^T\\ y & B
\end{bmatrix},$$ where $\alpha$ is a scalar, $x$ and $y$ are column vectors of order $n-1$, and $B$ is a square matrix of order $n-1$. Then \begin{enumerate}
\item $r(M)=r(B)+2$ if and only if  $x^T \notin rs(B), y\notin cs(B)$.
\item $r(M)=r(B)$ if and only if $\begin{bmatrix}
\alpha\\ y
\end{bmatrix}   \in cs \Big(\begin{bmatrix}
x^T \\ B
\end{bmatrix}\Big),  \begin{bmatrix}
\alpha & x^T
\end{bmatrix} \in rs\Big( \begin{bmatrix}
y & B
\end{bmatrix} \Big)$. 

\item $r(M)=r(B)+1$ if and only if one of the following hold \begin{enumerate}
\item  $x^T\in rs(B)$ and $ y \notin cs(B)$. 
\item $x^T \notin rs(B)$ and $ y \in cs(B).$ 
\item $x^T\in rs(B)$ and $y \in cs(B),$ but $ \begin{bmatrix}
\alpha & x^T
\end{bmatrix} \notin rs\Big(\begin{bmatrix}
y & B
\end{bmatrix} \Big)$ and $ \begin{bmatrix}
\alpha \\ y
\end{bmatrix} \notin cs \Big(\begin{bmatrix}
x^T \\ B
\end{bmatrix}\Big)$. 
\end{enumerate}
\end{enumerate} 
\end{theorem}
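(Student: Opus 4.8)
The plan is to reduce the whole statement to the two elementary facts recalled just before Observation~\ref{ob1}: adjoining one row (respectively, column) to a matrix leaves its rank unchanged when that row (column) belongs to the current row space (column space), and raises it by exactly $1$ otherwise. I would build $M$ from $B$ in two symmetric ways. Set $N=\begin{bmatrix} x^T\\ B\end{bmatrix}$ and $N'=\begin{bmatrix} y & B\end{bmatrix}$. Then $N$ is $B$ with the row $x^T$ adjoined on top, $N'$ is $B$ with the column $y$ adjoined on the left, and $M$ is obtained either from $N$ by adjoining the column $\begin{bmatrix}\alpha\\ y\end{bmatrix}$ or from $N'$ by adjoining the row $\begin{bmatrix}\alpha & x^T\end{bmatrix}$. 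The four appending facts are then: $r(N)=r(B)+1$ iff $x^T\notin rs(B)$; $r(N')=r(B)+1$ iff $y\notin cs(B)$; $r(M)=r(N)+1$ iff $\begin{bmatrix}\alpha\\ y\end{bmatrix}\notin cs(N)$; and $r(M)=r(N')+1$ iff $\begin{bmatrix}\alpha & x^T\end{bmatrix}\notin rs(N')$. I will also repeatedly use the trivial remarks that every element of $cs(N)$ has the form $\begin{bmatrix}x^Tc\\ Bc\end{bmatrix}$, so $\begin{bmatrix}\alpha\\ y\end{bmatrix}\in cs(N)$ forces $y\in cs(B)$, and dually $\begin{bmatrix}\alpha & x^T\end{bmatrix}\in rs(N')$ forces $x^T\in rs(B)$.

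For part~1, chaining the first and third facts shows $r(M)=r(B)+2$ only if $x^T\notin rs(B)$, and chaining the second and fourth shows it only if $y\notin cs(B)$; this gives necessity. For sufficiency, if $y\notin cs(B)$ then no $c$ satisfies $Bc=y$, so $\begin{bmatrix}\alpha\\ y\end{bmatrix}\notin cs(N)$, and together with $x^T\notin rs(B)$ the first and third facts yield $r(M)=r(B)+1+1$. For part~2, the inequalities $r(M)\ge r(N)\ge r(B)$ and $r(M)\ge r(N')\ge r(B)$ force all of these equal once $r(M)=r(B)$, hence both membership conditions hold. Conversely, if $\begin{bmatrix}\alpha\\ y\end{bmatrix}\in cs(N)$ then $r(M)=r(N)$, and the companion condition $\begin{bmatrix}\alpha & x^T\end{bmatrix}\in rs(N')$ gives $x^T\in rs(B)$, whence $r(N)=r(B)$ and $r(M)=r(B)$.

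For part~3 I would argue by elimination. Since $B$ is a submatrix of $M$ obtained by deleting one row and one column, $r(M)-r(B)\in\{0,1,2\}$, and parts~1 and~2 characterize exactly the values $2$ and $0$; hence $r(M)=r(B)+1$ if and only if the negations of both previous conditions hold. It remains to check that this negation is equivalent to the stated disjunction (a)$\lor$(b)$\lor$(c). Cases (a) and (b) are the situations where exactly one of $x^T\in rs(B)$, $y\in cs(B)$ fails, and there a direct application of the four facts gives $r(M)=r(B)+1$ at once; for instance, in (a) one has $r(N)=r(B)$ while $y\notin cs(B)$ forces $\begin{bmatrix}\alpha\\ y\end{bmatrix}\notin cs(N)$.

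The main obstacle is case (c), where $x^T\in rs(B)$ and $y\in cs(B)$ hold simultaneously. Here the elimination only yields that \emph{not both} membership conditions of part~2 hold, i.e.\ $\begin{bmatrix}\alpha\\ y\end{bmatrix}\notin cs(N)$ \emph{or} $\begin{bmatrix}\alpha & x^T\end{bmatrix}\notin rs(N')$, whereas the theorem asserts that \emph{both} fail. So the crux is to prove that, when $x^T\in rs(B)$ and $y\in cs(B)$, the two conditions $\begin{bmatrix}\alpha\\ y\end{bmatrix}\in cs(N)$ and $\begin{bmatrix}\alpha & x^T\end{bmatrix}\in rs(N')$ are equivalent. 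I would do this by writing $x^T=d_0^TB$ and $y=Bc_0$; using $x^Tz=d_0^TBz=0$ for every $z\in\ker B$, one sees that $\begin{bmatrix}\alpha\\ y\end{bmatrix}\in cs(N)$ holds precisely when $\alpha=x^Tc_0$, and dually $\begin{bmatrix}\alpha & x^T\end{bmatrix}\in rs(N')$ holds precisely when $\alpha=d_0^Ty$. Since $x^Tc_0=d_0^TBc_0=d_0^Ty$, these two scalar conditions coincide, so the two memberships are equivalent and the disjunction upgrades to a conjunction, matching (c). This equivalence is exactly what makes the symmetric pair of conditions in (c) consistent, and it is the only step requiring more than a one-line appeal to the appending facts.
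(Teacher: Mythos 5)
Your proposal is correct and rests on the same two elementary appending facts the paper uses, with the same auxiliary matrices $N=\begin{bmatrix} x^T\\ B\end{bmatrix}$ and $N'=\begin{bmatrix} y & B\end{bmatrix}$; your parts 1 and 2 essentially reproduce the paper's argument. Where you genuinely go beyond the paper is in part 3. The paper only verifies the ``if'' direction there---it checks that each of (a), (b), (c) forces $r(M)=r(B)+1$---and leaves the converse implicit. Your elimination argument supplies that converse, and in doing so you isolate and close the one real subtlety: when $x^T\in rs(B)$ and $y\in cs(B)$, negating the condition of part 2 only yields that \emph{at least one} of the two bordered membership conditions fails, whereas (c) asserts that \emph{both} fail. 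Your lemma that the two conditions are equivalent in this doubly degenerate case (membership in $cs(N)$ reduces to $\alpha=x^Tc_0$, membership in $rs(N')$ to $\alpha=d_0^Ty$, and $x^Tc_0=d_0^TBc_0=d_0^Ty$, using that $x^T$ annihilates $\ker B$ and that the left kernel of $B$ annihilates $y$) is correct and is exactly what is needed to show that the conjunction stated in (c) is exhaustive. So your write-up is not merely a valid proof but a more complete one than the paper's own.
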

\begin{proof} 
   \begin{equation*}
 r(M)\leq r\Bigg( \begin{bmatrix}
x^T\\ B
\end{bmatrix}\Bigg)+1 \leq (r(B)+1)+1=r(B)+2.
\end{equation*}
\begin{enumerate} 
\item  Equality holds in the right inequality if and only if $x^T \notin rs(B)$ and in the left inequality if and only if  $ \begin{bmatrix}
\alpha\\ y
\end{bmatrix} \notin cs\Big(\begin{bmatrix}
x \\ B
\end{bmatrix}\Big)$. Similarly, $r(M)= r(B)+2$ if and only if $y \notin cs(B)$ and $\begin{bmatrix}
\alpha & x^T
\end{bmatrix} \notin rs \Big(\begin{bmatrix}
y & B
\end{bmatrix}\Big).$ Thus if $r(M)=r(B)+2,$ $x^T \notin rs(B)$ and $y \notin cs(B)$. Conversely, if $x^T \notin rs(B)$ and $y \notin cs(B),$ then  $\begin{bmatrix}
\alpha \\ y
\end{bmatrix} \notin cs\Big(\begin{bmatrix}
x^T \\ B
\end{bmatrix}\Big),$ which implies $r(M)=r(B)+2$. 

\item $r(M)=r\Bigg( \begin{bmatrix}
x^T\\ B
\end{bmatrix}\Bigg)$ if and only if $\begin{bmatrix}
\alpha \\ y
\end{bmatrix} \in cs\Big(\begin{bmatrix}
x^T \\ B
\end{bmatrix}\Big).$ And $r(B)=r\Bigg( \begin{bmatrix}
x^T\\ B
\end{bmatrix}\Bigg)$ if and only if $x^T \in rs(B)$. That is, $r(M)=r(B)$ if and only if $\begin{bmatrix}
\alpha \\ y
\end{bmatrix} \in cs \Big(\begin{bmatrix}
x^T \\ B
\end{bmatrix}\Big)$ and $x^T \in rs(B)$. Similarly, $r(M)=r(B)$ if and only if $\begin{bmatrix}
\alpha & x^T
\end{bmatrix} \in rs \Big(\begin{bmatrix}
y & B
\end{bmatrix}\Big)$ and $y \in cs(B)$. Hence, $r(M)=r(B),$ if and only if $\begin{bmatrix}
\alpha\\ y
\end{bmatrix} \in cs \Big(\begin{bmatrix}
x^T \\ B
\end{bmatrix}\Big)$,  $ \begin{bmatrix}
\alpha & x^T
\end{bmatrix} \in rs\Big(\begin{bmatrix}
y & B
\end{bmatrix}\Big).$
\item  If $x^T \in rs(B)$ then $r(B)=r\Bigg( \begin{bmatrix}
x^T\\ B
\end{bmatrix}\Bigg)$. If $y \notin cs(B)$ then $\begin{bmatrix}
\alpha \\ y
\end{bmatrix}\notin cs\Big(\begin{bmatrix}
x^T \\ B
\end{bmatrix}\Big)$. That is, if  $x^T\in rs(B)$ and $ y \notin cs(B)$ then $r(M)=r(B)+1$. 
Similarly, if $x^T \notin rs(B)$ and $ y \in cs(B).$  If $x^T \in rs(B)$ and $y \in cs(B)$, then $r(B)=r\Bigg( \begin{bmatrix}
x^T\\ B
\end{bmatrix}\Bigg)=r\Big( \begin{bmatrix}
y & B
\end{bmatrix}\Big)$. Which implies that  if  $ \begin{bmatrix}
\alpha & x^T
\end{bmatrix} \notin rs \Big(\begin{bmatrix}
y & B
\end{bmatrix}\Big)$, $ \begin{bmatrix}
\alpha \\ y
\end{bmatrix} \notin cs \Big( \begin{bmatrix}
x^T \\ B
\end{bmatrix}\Big)$, then $r(M)=r(B)+1.$

\end{enumerate} 
\end{proof}

 \begin{remark}\label{rmk1}
By part 1 of Lemma \ref{hy}, $r(M)=r(B)+2$ for some $\alpha \in \mathcal{R}$ if and only if $r(M)=r(B)+2$ for every $\alpha \in \mathcal{R}$.
\end{remark} 


\begin{figure}

     \begin{subfigure}[b]{0.35\textwidth}
             \begin{tikzpicture}[scale=0.6]

 \draw  node[draw,circle,scale=0.60] (1) at (0,0) {$v_1$};
 \draw  node[draw,circle,scale=0.60] (2) at (-1.5,1.5) {$v_2$};
 \draw  node[draw,circle,scale=0.60] (3) at (1.5,1.5) {$v_3$};
 \draw  node[draw,circle,scale=0.60] (4) at (-1.5,-1.5) {$v_4$};
\draw  node[draw,circle,scale=0.60] (5) at (1.5,-1.5) {$v_5$};
\draw  node[draw,circle,scale=0.60] (6) at (2.5,-1) {$v_6$};
\draw  node[draw,circle,scale=0.60] (8) at (2.5,1) {$v_8$};
\draw  node[draw,circle,scale=0.60] (7) at (5,0) {$v_7$};
\draw  node[draw,circle,scale=0.60] (9) at (2.5,2.5) {$v_9$};
\draw  node[draw,circle,scale=0.50] (10) at (4,2.5) {$v_{10}$};

\draw  node[draw,circle,scale=0.50] (11) at (2.5,-2.5) {$v_{11}$};
\draw  node[draw,circle,scale=0.50] (12) at (4,-2.5) {$v_{12}$};

\draw  node[draw,circle,scale=0.50] (13) at (-1.8,-2.8) {$v_{13}$};

\draw  node[draw,circle,scale=0.50] (14) at (-1.8,2.8) {$v_{14}$};

\tikzset{edge/.style = {-> = latex'}}
\draw[edge] (1) to (3);
\draw[edge] (3) to (2);
\draw[edge] (2) to (1);

\draw[edge] (10) to (8);
\draw[edge] (12) to (11);
\draw[edge] (12) to (6);
\draw[edge] (11) to (6);
\draw[edge] (13) to (4);
\draw[edge] (2) to (14);

\draw (4) to [out=200,in=170,looseness=8] (4);
\draw (11) to [out=330,in=300,looseness=8] (11);

\draw (7) to [out=330,in=300,looseness=8] (7);

\draw (10) to [out=380,in=350,looseness=8] (10);

\draw (3) to [out=150,in=120,looseness=8] (3);

\draw[edge] (8) to[bend left=8] (6);
\draw[edge] (6) to[bend left=8] (8);
\draw[edge] (6) to[bend left=8] (8);

\tikzset{edge/.style = {- = latex'}}
\draw[edge] (1) to (5);
\draw[edge] (5) to (4);
\draw[edge] (4) to (1);

\draw[edge] (1) to (8);
\draw[edge] (8) to (7);
\draw[edge] (7) to (6);
\draw[edge] (6) to (1);

\draw[edge] (8) to (9);
\draw[edge] (10) to (9);
\draw[edge] (1) to (7);

\end{tikzpicture}
     \caption{}
        \label{bal}
    \end{subfigure}\begin{subfigure}[b]{0.35\textwidth}
\begin{tikzpicture}[scale=0.6]

 \draw  node[draw,circle,scale=0.60] (1) at (0,0) {$v_1$};
 \draw  node[draw,circle,scale=0.60] (2) at (-1.5,1.5) {$v_2$};
 \draw  node[draw,circle,scale=0.60] (3) at (1.5,1.5) {$v_3$};
 \draw  node[draw,circle,scale=0.60] (4) at (-1.5,-1.5) {$v_4$};
\draw  node[draw,circle,scale=0.60] (5) at (1.5,-1.5) {$v_5$};
\draw  node[draw,circle,scale=0.60] (6) at (2.5,-1) {$v_6$};
\draw  node[draw,circle,scale=0.60] (8) at (2.5,1) {$v_8$};
\draw  node[draw,circle,scale=0.60] (7) at (5,0) {$v_7$};
\draw  node[draw,circle,scale=0.60] (9) at (2.5,2.5) {$v_9$};
\draw  node[draw,circle,scale=0.50] (10) at (4,2.5) {$v_{10}$};

\draw  node[draw,circle,scale=0.50] (11) at (2.5,-2.5) {$v_{11}$};
\draw  node[draw,circle,scale=0.50] (12) at (4,-2.5) {$v_{12}$};

\draw  node[draw,circle,scale=0.50] (13) at (-1.8,-2.8) {$v_{13}$};

\draw  node[draw,circle,scale=0.50] (14) at (-1.8,2.8) {$v_{14}$};

\draw  node[draw,circle,scale=0.50] (15) at (-0.5,2.8) {$v_{15}$};
\draw  node[draw,circle,scale=0.50] (16) at (-1.8,0) {$v_{16}$};
\draw  node[draw,circle,scale=0.50] (17) at (-0.5,-2.8) {$v_{17}$};
\draw  node[draw,circle,scale=0.50] (18) at (4.5,1.5) {$v_{18}$};
\draw  node[draw,circle,scale=0.50] (19) at (4.5,-1.5) {$v_{19}$};

\tikzset{edge/.style = {-> = latex'}}
\draw[edge] (1) to (3);
\draw[edge] (3) to (2);
\draw[edge] (2) to (1);

\draw[edge] (10) to (8);
\draw[edge] (12) to (11);
\draw[edge] (12) to (6);
\draw[edge] (11) to (6);
\draw[edge] (13) to (4);
\draw[edge] (2) to (14);

\draw (4) to [out=200,in=170,looseness=8] (4);
\draw (11) to [out=330,in=300,looseness=8] (11);

\draw (7) to [out=330,in=300,looseness=8] (7);

\draw (10) to [out=380,in=350,looseness=8] (10);

\draw (2) to [out=200,in=170,looseness=8] (2);

\draw (3) to [out=150,in=120,looseness=8] (3);

\draw[edge] (8) to[bend left=8] (6);
\draw[edge] (6) to[bend left=8] (8);
\draw[edge] (6) to[bend left=8] (8);
\draw[edge] (2) to[bend left=8] (15);
\draw[edge] (15) to[bend left=8] (2);
\draw[edge] (1) to[bend left=8] (16);
\draw[edge] (16) to[bend left=8] (1);
\draw[edge] (6) to[bend left=8] (19);
\draw[edge] (19) to[bend left=8] (6);

\tikzset{edge/.style = {- = latex'}}
\draw[edge] (1) to (5);
\draw[edge] (5) to (4);
\draw[edge] (4) to (1);

\draw[edge] (1) to (8);
\draw[edge] (8) to (7);
\draw[edge] (7) to (6);
\draw[edge] (6) to (1);

\draw[edge] (8) to (9);
\draw[edge] (10) to (9);
\draw[edge] (1) to (7);
\draw[edge] (18) to (8);
\draw[edge] (17) to (4);

\end{tikzpicture}
     \caption{}\label{r2di}
    \end{subfigure}          \begin{subfigure}[b]{0.22\textwidth}
\begin{tikzpicture}[scale=0.6]

 \draw  node[draw,circle,scale=0.60] (1) at (0,0) {$v_1$};
 \draw  node[draw,circle,scale=0.60] (2) at (-1.5,1.5) {$v_2$};

 \draw  node[draw,circle,scale=0.60] (3) at (-1.5,-1.5) {$v_3$};
\draw  node[draw,circle,scale=0.60] (4) at (1.5,-1.5) {$v_4$};
\draw  node[draw,circle,scale=0.60] (5) at (2.5,0) {$v_5$};

\draw  node[draw,circle,scale=0.60] (6) at (5,0) {$v_6$};
\draw  node[draw,circle,scale=0.60] (7) at (2.5,2.5) {$v_7$};
\draw  node[draw,circle,scale=0.60] (8) at (4,2.5) {$v_{8}$};

\draw  node[draw,circle,scale=0.60] (9) at (0.6,3) {$v_{9}$};

\draw  node[draw,circle,scale=0.50] (10) at (4,-1.8) {$v_{10}$};

\tikzset{edge/.style = {-> = latex'}}

\draw[edge] (1) to[bend left=8] (2);
\draw[edge] (2) to[bend left=8] (1);

\draw[edge] (3) to[bend left=8] (1);

\draw[edge] (1) to[bend left=8] (7);
\draw[edge] (7) to[bend left=8] (1);

\draw[edge] (7) to[bend left=8] (5);
\draw[edge] (5) to[bend left=8] (7);

\draw[edge] (5) to[bend left=8] (8);
\draw[edge] (8) to[bend left=8] (5);

\draw[edge] (8) to[bend left=8] (6);
\draw[edge] (6) to[bend left=8] (8);

\draw[edge] (4) to[bend left=8] (5);
\draw[edge] (5) to[bend left=8] (4);

\draw[edge] (7) to[bend left=8] (9);
\draw[edge] (9) to[bend left=8] (7);
\draw[edge] (5) to[bend left=8] (10);

\draw (3) to [out=330,in=300,looseness=8] (3);
\draw (1) to [out=330,in=300,looseness=8] (1);
\draw (8) to [out=380,in=350,looseness=8] (8);
\draw (5) to [out=380,in=350,looseness=8] (5);

\end{tikzpicture}
     \caption{}\label{rti}
    \end{subfigure}

\caption{The digraph in (a) is extented to an $r_2$-digraph in (b) by adding $\tilde{nc}$-edges at the cut-vertices. (c) An $r_2$-tree digraph. In this figure the undirected edges are simple edges, which are equivalent to two opposite arcs of same weight.} \label{r2fig}
\end{figure}
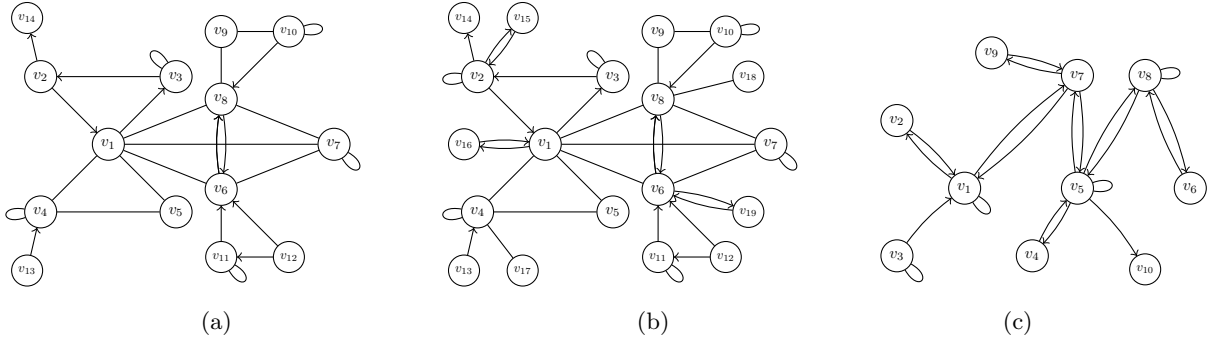

%
%
%
%
%
%

\section{CASE I} \label{secr2}

 \begin{theorem} \label{+2}
 Let $G$ be a digraph with a cut-vertex $v$. Let $H$ be a nonempty induced  subdigraph which includes $v$ such that there is no arc $(p,q)$ or $(q,p)$, where $p\in V(H\setminus v)$ and  $ q \in V(G\setminus H)$. If $r(H)=r(H\setminus v)+2$, then
    $$r(G)=r(H\setminus v)+ r(G\setminus H)+2.$$
    \end{theorem}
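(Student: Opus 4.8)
The plan is to reduce everything to a single application of part 1 of Theorem~\ref{hy}, after recording the matrix structure that the separation hypothesis forces. Order the vertices of $G$ so that $v$ comes first, followed by the vertices of $V(H\setminus v)$ and then those of $V(G\setminus H)$. Write $B_1=A(H\setminus v)$ and $B_2=A(G\setminus H)$, collect the loop weight at $v$ into the scalar $\alpha=a_{vv}$, let $d,p$ record the out-arcs of $v$ into $H\setminus v$ and into $G\setminus H$, and let $c,q$ record the in-arcs to $v$ from $H\setminus v$ and from $G\setminus H$. Since no arc joins $V(H\setminus v)$ to $V(G\setminus H)$, the two off-diagonal blocks between those vertex sets vanish, so up to this common permutation of rows and columns (which preserves all ranks)
\[
A(G)=\begin{bmatrix}\alpha & d^T & p^T\\ c & B_1 & 0\\ q & 0 & B_2\end{bmatrix},\qquad
A(H)=\begin{bmatrix}\alpha & d^T\\ c & B_1\end{bmatrix}.
\]

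First I would extract the arithmetic consequence of the hypothesis. Applying part 1 of Theorem~\ref{hy} to $A(H)$ with $B=B_1$, $x^T=d^T$, $y=c$, the equality $r(H)=r(H\setminus v)+2=r(B_1)+2$ is equivalent to $d^T\notin rs(B_1)$ and $c\notin cs(B_1)$. These two nonmembership facts are the engine of the whole argument.

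Next I would apply part 1 of Theorem~\ref{hy} a second time, now to $M=A(G)$ in the displayed form, taking
\[
B=\begin{bmatrix}B_1 & 0\\ 0 & B_2\end{bmatrix},\qquad x^T=\begin{bmatrix}d^T & p^T\end{bmatrix},\qquad y=\begin{bmatrix}c\\ q\end{bmatrix}.
\]
The key step is to observe that, because $B$ is block diagonal, a row vector $\begin{bmatrix}u^T & w^T\end{bmatrix}$ (split conformally) lies in $rs(B)$ if and only if $u^T\in rs(B_1)$ and $w^T\in rs(B_2)$, and dually for $cs(B)$. Hence $x^T\in rs(B)$ would force $d^T\in rs(B_1)$, and $y\in cs(B)$ would force $c\in cs(B_1)$; both are excluded by the previous paragraph. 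Therefore $x^T\notin rs(B)$ and $y\notin cs(B)$, and part 1 of Theorem~\ref{hy} yields $r(A(G))=r(B)+2$. Since $B$ is block diagonal, $r(B)=r(B_1)+r(B_2)=r(H\setminus v)+r(G\setminus H)$, giving $r(G)=r(H\setminus v)+r(G\setminus H)+2$.

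The argument becomes essentially automatic once $v$ is placed first; the one point deserving care — the main obstacle — is justifying that the nonmembership $d^T\notin rs(B_1)$ survives the enlargement, i.e.\ that appending the extra coordinates $p^T$ associated with $G\setminus H$ cannot drag $x^T$ into $rs(B)$ (and symmetrically for $c$ and $y$). This is exactly where the vanishing of the off-diagonal blocks, itself a direct consequence of the separation hypothesis on the cut-vertex $v$, is indispensable: without the block-diagonal form of $B$ the implication would not hold.
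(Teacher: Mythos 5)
Your proof is correct and follows essentially the same route as the paper's: the same vertex ordering and bordered block-diagonal decomposition of $A(G)$, the same use of Theorem~\ref{hy}(1) to extract $d^T\notin rs(B_1)$ and $c\notin cs(B_1)$ from the hypothesis on $H$, and the same key observation that these nonmemberships persist after appending the zero off-diagonal blocks. The only cosmetic difference is that you invoke Theorem~\ref{hy}(1) a second time to conclude, whereas the paper re-runs the row-then-column rank-increment argument inline.
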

    \begin{proof} With suitable reordering of the vertices in $G$, we can write,
    $$A(G)=\begin{bmatrix} \alpha & x^T & w^T \\ y & A(H \setminus v) & O \\ z & O^T & A(G \setminus H)\end{bmatrix},$$ where the first row and the first column correspond to the cut-vertex $v$, $\alpha$ is the weight of the loop at $v$, and $O$ denotes the zero matrix of suitable order. Clearly the rank of the matrix $$\begin{bmatrix} A(H \setminus v) & O \\  O^T & A(G \setminus H)\end{bmatrix}$$ is $r(H\setminus v)+r(G\setminus H)$. Consider the submatrix $$M=\begin{bmatrix}  x^T & w^T \\ A(H \setminus v) & O \\  O^T & A(G \setminus H)\end{bmatrix}.$$ The rank of $M$ can be at most $r(H\setminus v)+r(G\setminus H)+1$. By the hypothesis and using Theorem \ref{hy}(1), $x^T \notin rs(A(H\setminus v))$. Thus the first row of $M$ is not in the row space of the rest of the matrix. Hence the rank of $M$ is $r(H\setminus v)+r(G\setminus H)+1$. Now, consider $A(G)$. Its rank can be at most $r(M)+1$. Again using Theorem \ref{hy}(1)  and by the hypothesis, the column vector $y \notin cs(A(H\setminus v))$, which implies that the first column of $A(G)$ is not  in the column space of $M$. Hence $$r(A(G))=r(M)+1=r(H\setminus v)+r(G\setminus H)+2.$$ Which proves the result.
    \end{proof}

\subsection{Tree digraph}
 A directed edge or arc from a vertex $v_1$ to a vertex $v_2$ of weight $\alpha$ is denoted by $\overrightarrow{v_1\alpha v_2}$. A \emph{simple weighted edge} of weight $\alpha$ between $v_1$ and $v_2$ is an edge, where $\overrightarrow{v_1\alpha v_2}, \overrightarrow{v_2\alpha v_1}$, and no loops on $v_1$ and $v_2$. In general an edge between the vertices $v_1, v_2$ in a weighted directed graph $G$ is a set of two arcs, $\overrightarrow{v_1\alpha_1 v_2}, \overrightarrow{v_2\alpha_2 v_1}$. We now define a few more types of edges for the purpose of our study. Let $v$ be a cut-vertex and $u$ be a noncut-vertex of the digraph $G$. An \emph{$\tilde{nc}$-edge} between $v$ and $u$ is an edge, where $\overrightarrow{v\alpha_1 u}, \overrightarrow{u \alpha_2 v}$ ($\alpha_1, \alpha_2$ are nonzero), and there is no loop on the noncut-vertex $u$. Similarly, an \emph{$\tilde{nc}$-arc} between $v$ and $u$ is an arc either from $v$ to $u$ or $u$ to $v$ and no loop on $u$. Analogously, an \emph{$nc$-edge} between $v$ and $u$ is an edge, where $\overrightarrow{v\alpha_1 u}, \overrightarrow{u \alpha_2 v},$ ($\alpha_1, \alpha_2$ are nonzero), and there is a loop on $u$. Finally, an \emph{$nc$-arc} between $v$ and $u$ is an arc either from $v$ to $u$ or $u$ to $v$ and a loop on $u$. \emph{Matching} in a digraph is a set of vertex disjoint pairs of arcs on the same vertices.

For an undirected tree having maximum matching of size $q$, the rank is equal to $2q$ \cite{cvetkovic1972algebraic}. We show that this result is also true for some categories of tree digraphs.  Let $T$ denote a tree digraph obtained from an undirected tree by replacing its each simple edge $(u,v)$ by arcs $\overrightarrow{u\alpha_1 v}, \overrightarrow{v\alpha_2 u}$ of arbitrary nonzero weights $\alpha_1, \alpha_2.$ We will call such a tree digraph $T$ a loopless bi-arc tree. If loops are allowed only at cut-vertices then we call $T$ as a cut-loop bi-arc tree. That is, in a cut-loop bi-arc tree, any noncut vertex does not have a loop while any cut-vertex may or may not have a loop. We consider a class of tree digraphs having loops and single arcs on the vertices in the next subsection.

\begin{theorem} \label{tt}
Let $T$ be a loopless bi-arc tree having the maximum matching of size $q$. Then $r(T)=2q$.
\end{theorem}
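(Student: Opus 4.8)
The plan is to prove the statement by strong induction on $n=|V(T)|$, at each step stripping off a pendant edge, which in a tree is a block, and invoking Theorem \ref{+2}. The base cases $n\le 2$ are immediate: a loopless single vertex has rank $0=2\cdot 0$ and empty matching, while a bi-arc on two vertices $u,v$ has adjacency matrix $\bigl[\begin{smallmatrix} 0 & \alpha_1\\ \alpha_2 & 0\end{smallmatrix}\bigr]$ with $\alpha_1,\alpha_2\neq 0$, hence rank $2=2\cdot 1$ with maximum matching $q=1$. For $n\ge 3$, I would choose a leaf $u$ of $T$ and let $v$ be its unique neighbour. Since $T$ is connected with at least three vertices, $v$ cannot also be a leaf (otherwise $\{u,v\}$ would be the whole tree), so $v$ has degree at least two and is therefore a cut-vertex. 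Let $H$ be the induced subdigraph on $\{u,v\}$; this is the pendant block at $u$, and $u$ is the only vertex of $H\setminus v$, which has no arc to $V(T\setminus H)$ precisely because $u$ is a leaf, so the hypothesis of Theorem \ref{+2} holds.

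The same $2\times2$ computation shows $r(H)=2=r(H\setminus v)+2$, so $H$ falls under CASE I — this is exactly the digraph instance of CASE I in the Example, and it holds for all nonzero weights. Hence Theorem \ref{+2} applies and gives
$$r(T)=r(H\setminus v)+r(T\setminus H)+2=r(T\setminus\{u,v\})+2,$$
since $H\setminus v$ is a single loopless vertex of rank $0$. Now $T\setminus\{u,v\}$ is a loopless bi-arc forest on fewer vertices; writing it as the disjoint union of its components $T_1,\dots,T_k$ (each a loopless bi-arc tree, possibly a single vertex) and using that its adjacency matrix is block-diagonal, I would get $r(T\setminus\{u,v\})=\sum_i r(T_i)=2\sum_i q(T_i)=2\,q(T\setminus\{u,v\})$ by the induction hypothesis applied to each $T_i$ together with additivity of the maximum matching over components.

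It remains to line up the matchings of $T$ and $T\setminus\{u,v\}$. Here I would use the standard exchange argument: a pendant vertex $u$ with neighbour $v$ lies in some maximum matching $M$ together with the edge $uv$ — if a maximum matching leaves $u$ unmatched then $v$ must be matched (else adjoining $uv$ would enlarge it), and replacing $v$'s edge by $uv$ yields a maximum matching containing $uv$. Deleting this edge gives $q(T)=q(T\setminus\{u,v\})+1$. Combining with the displayed rank recursion,
$$r(T)=r(T\setminus\{u,v\})+2=2\,q(T\setminus\{u,v\})+2=2\,q(T),$$
which closes the induction.

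The algebraic core — that a pendant bi-arc is a CASE I block regardless of the weights — is handed to us by Theorem \ref{+2}, so the rank side is weight-independent and essentially free. The genuine content, and the step I would be most careful about, is the purely combinatorial identity $q(T)=q(T\setminus\{u,v\})+1$ via the exchange argument, together with the additivity bookkeeping over the components of the forest $T\setminus\{u,v\}$; it is exactly the alignment of these two recursions (each decreasing its quantity by the matched amount) that forces $r(T)=2q$.
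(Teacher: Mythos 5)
Your proposal is correct and follows essentially the same route as the paper: peel off a pendant edge, verify it is a CASE I block, apply Theorem \ref{+2} to get $r(T)=r(T\setminus\{u,v\})+2$, and match this against the combinatorial recursion $q(T)=q(T\setminus\{u,v\})+1$ over the components of the resulting forest. The only cosmetic difference is that you establish the matching recursion by the standard exchange argument, whereas the paper argues directly that adjoining $uv$ to maximum matchings of the components is already maximum; both are valid.
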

\begin{proof}
We denote by $T_n$ any bi-arc tree with $n$ vertices. We will prove the result by using induction on $n$. For $n=1$, clearly the result is true. Assume that the result is true for every $T_n$, $n\geq 1$. Consider a tree $T_{n+1}$. Let $e_{uv}=(u,v)$ be a pendant edge of $T_{n+1},$ $v$ the cut-vertex.  Let $C_1, \ldots, C_k$ be the components (bi-arc trees of order less than $n$) of $T_{n+1}\setminus (u,v)$ having maximum matching of size
$m_1,\ldots,m_k$, respectively. Adding to the maximal matchings of $C_1,\hdots, C_k,$ the edge $(u,v)$ yields a matching in $T_{n+1}$ of size $\sum_{i=1}^{k} m_i+1.$ This is a maximal matching in $T_{n+1},$ since any matching in $T_{n+1}$ consists of at most one edge incident with $v$ and matchings in $C_1,\hdots, C_k,$ and thus has size of at most $\sum_{i=1}^{k} m_i+1.$ By the induction hypothesis $r(T_{n+1}\setminus (u,v))=2\sum_{i=1}^km_i$. Using Theorem \ref{+2}, $r(T_{n+1})=r(e_{uv}\setminus v)+r(T_{n+1}\setminus (u,v))+2=0+2(\sum_{i=1}^km_i)+2=2(\sum_{i=1}^km_i+1).$  It proves the result. \end{proof} 

%
%
%
%

\begin{lemma} \label{2rin}
Consider $m$ vertices $v_1,\hdots, v_m$ of a digraph $G$. Then $r(G)=r(G\setminus\{v_1,\hdots,v_m\})+2m$ if and only if for every subset $S$ of $\{v_1,\hdots, v_m\}$, $r(G)=r(G\setminus S)+2|S|.$ 
\end{lemma}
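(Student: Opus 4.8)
The plan is to prove the statement by induction on $m$, using the reduction $r(G)=r(G\setminus S)+2|S|$ repeatedly to remove one vertex at a time. The forward direction (if the full equality $r(G)=r(G\setminus\{v_1,\hdots,v_m\})+2m$ holds, then it holds for every subset $S$) is the substantive one, so I would concentrate the work there. The reverse direction is immediate: simply take $S=\{v_1,\hdots,v_m\}$, for which the hypothesis gives $r(G)=r(G\setminus S)+2|S|=r(G\setminus\{v_1,\hdots,v_m\})+2m$.

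For the forward direction, the key observation I would exploit is monotonicity: removing one vertex from a matrix changes its rank by $0$, $1$, or $2$ (this is the ``at most $2$'' bound underlying Observation \ref{ob1} and Theorem \ref{hy}). Consequently, for \emph{any} nested chain of vertex deletions, each single deletion drops the rank by at most $2$, so over $m$ deletions the total drop is at most $2m$, with equality only if \emph{every} single step drops the rank by exactly $2$. The strategy is therefore to show that the global equality $r(G)=r(G\setminus\{v_1,\hdots,v_m\})+2m$ forces each intermediate deletion to be a ``$+2$'' step, from which the claim for an arbitrary subset $S$ follows. Concretely, fix a subset $S$ and order the vertices so that $S$ is deleted first: write $r(G)-r(G\setminus\{v_1,\hdots,v_m\}) = \big(r(G)-r(G\setminus S)\big) + \big(r(G\setminus S)-r(G\setminus\{v_1,\hdots,v_m\})\big)$. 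Each of the two bracketed differences is at most $2$ times the number of vertices removed in that stage, namely at most $2|S|$ and at most $2(m-|S|)$ respectively. Since their sum equals $2m=2|S|+2(m-|S|)$, each bracket must attain its maximum, and in particular $r(G)-r(G\setminus S)=2|S|$, which is exactly the desired identity.

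The main obstacle I anticipate is making the ``each single deletion drops the rank by at most $2$'' estimate fully rigorous and additive across a chain of deletions. The clean way to handle this is to establish the one-vertex bound as a lemma (or to cite the paragraph preceding Observation \ref{ob1}, which states that enlarging a square matrix by one row and column raises its rank by at most $2$, equivalently that deleting one vertex lowers it by at most $2$), and then to telescope: for any ordering $w_1,\hdots,w_k$ of a set of deleted vertices,
\begin{equation*}
r(G)-r(G\setminus\{w_1,\hdots,w_k\})=\sum_{i=1}^{k}\Big(r(G\setminus\{w_1,\hdots,w_{i-1}\})-r(G\setminus\{w_1,\hdots,w_i\})\Big)\le 2k,
\end{equation*}
where each summand lies in $\{0,1,2\}$. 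Applying this telescoping bound separately to the deletion of $S$ and to the subsequent deletion of the remaining $m-|S|$ vertices gives the two inequalities used above, and the equality-forcing argument then completes the proof.
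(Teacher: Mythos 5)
Your proof is correct and follows essentially the same route as the paper: both rest on the fact that deleting one vertex lowers the rank by at most $2$, and both split the total drop of $2m$ into the drop from deleting $S$ and the drop from deleting the remaining $m-|S|$ vertices, forcing each stage to attain its maximum (the paper phrases this as a proof by contradiction, you phrase it as an equality-forcing telescoping sum, which is the same argument). The only cosmetic difference is that you announce an induction on $m$ that you never actually use, and you explicitly note the trivial reverse direction, which the paper omits.
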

\begin{proof}
Without loss of generality let us consider a subset $S=\{v_1,\hdots v_r\},$ $r\leq m$ of $\{v_1,\hdots v_m\}$. Assume that $r(G)\neq r(G\setminus S)+2r$ but $r(G)=r(G\setminus\{v_1,\hdots,v_m\})+2m$. Note that as deletion of vertex from $G$ results in deletion of a row and a column in corresponding matrix hence the rank can be decrease at most by 2. Thus $r(G)\neq r(G\setminus S)+2r$ implies $r(G)< r(G\setminus S)+2r$. If we increase the size of $S$ by adding vertices $v_{r+1}, v_{r+2},\hdots v_m$, that is $S= \{v_1,\hdots v_m\}$. Then further there can be at most $2(m-r)$ reduction in the rank. That is now $r(G)< r(G\setminus S)+2r+2(m-r)=r(G\setminus S)+2m.$ Which is a contradiction, hence the result follows. 
\end{proof}

Let $G$ be a digraph with $m$ cut-vertices and $k$ blocks $B_1, B_2,\hdots, B_k$ having $m_1,m_2,\hdots, m_k$ cut-vertices of $G,$ respectively. 
Then $\breve{B_i}$ denotes the induced subgraph of $B_i$ on the noncut-vertices, $i=1,2,\hdots,k$. And $G^q$ denotes a digraph obtained after removing of $q(\leq m)$ cut-vertices from $G$.

\begin{definition}
A block induced pendant subdigraph of $G^q$ is a maximal subdigraph of any block of $G$ such that it has exactly one cut vertex of $G$.
\end{definition}

\begin{theorem}\label{mdt}
If $r(B_i)=r(\breve{B_i})+2m_i, V(\breve{B_i})\neq \phi, i=1,2,\hdots k,$ then $r(G)=\sum_{i=1}^k r(\breve{B_i})+2m.$
\end{theorem}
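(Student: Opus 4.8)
The plan is to induct on the number of blocks $k$. Since both sides of the claimed identity are additive over the connected components of $G$, I may assume $G$ is connected. When $k=1$ the digraph is nonseparable, so $m=0$, $\breve{B_1}=B_1=G$, and the assertion $r(G)=r(\breve{B_1})$ is trivial. So assume $k\ge 2$ and that the statement holds for every digraph with fewer than $k$ blocks. As $G$ is connected with $k\ge 2$ blocks it has a pendant block; relabel so that $B_k$ is pendant and let $v$ be its unique cut-vertex, whence $m_k=1$ and $B_k\setminus v=\breve{B_k}$. Because every vertex of $\breve{B_k}$ is a noncut-vertex lying only in $B_k$, there is no arc joining $\breve{B_k}$ to $V(G\setminus B_k)$, and the hypothesis gives $r(B_k)=r(\breve{B_k})+2$. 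Thus Theorem \ref{+2} applies with $H=B_k$ and yields
$$r(G)=r(\breve{B_k})+r(G\setminus B_k)+2.$$

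Next I would pass to $G'=G\setminus\breve{B_k}$, the digraph obtained by deleting only the noncut-vertices of the pendant block while retaining $v$; then $G\setminus B_k=G'\setminus v$, and $B_1,\dots,B_{k-1}$ are exactly the $k-1$ blocks of $G'$ (they are left structurally intact, since they meet $B_k$ only in $v$). The key step is to verify that $G'$ again satisfies the rank hypotheses, so that the induction hypothesis gives $r(G')=\sum_{i=1}^{k-1}r(\breve{B_i})+2m$. This splits into two cases according to whether $v$ survives as a cut-vertex of $G'$. If $v$ belonged to at least three blocks of $G$, it is still a cut-vertex of $G'$, every $\breve{B_i}$ and every $m_i$ for $i<k$ is unchanged, and the hypotheses transfer verbatim. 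If $v$ belonged to exactly two blocks, say $B_k$ and $B_{k-1}$, then $v$ becomes a noncut-vertex of $G'$, the noncut-part of $B_{k-1}$ grows to $\breve{B_{k-1}}\cup\{v\}$, and its count drops to $m_{k-1}-1$. Here I would invoke Lemma \ref{2rin} applied to $B_{k-1}$: the hypothesis $r(B_{k-1})=r(\breve{B_{k-1}})+2m_{k-1}$ is the ``remove all $m_{k-1}$ cut-vertices at once'' statement, so taking the subset consisting of all cut-vertices of $B_{k-1}$ except $v$ gives simultaneously $r(B_{k-1})=r(\breve{B_{k-1}}\cup\{v\})+2(m_{k-1}-1)$ (the hypothesis required of $G'$) and $r(\breve{B_{k-1}}\cup\{v\})=r(\breve{B_{k-1}})+2$. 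In either case the induction hypothesis produces $r(G')=\sum_{i=1}^{k-1}r(\breve{B_i})+2m$.

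Finally, I would extract the single deletion $r(G'\setminus v)$ from this. Deleting all $m$ cut-vertices of $G$ (they are genuine vertices of $G'$, whether or not they remain cut-vertices of $G'$) leaves $\bigsqcup_{i=1}^{k-1}\breve{B_i}$, so the identity just obtained reads $r(G')=r\big(G'\setminus\{\text{these }m\text{ vertices}\}\big)+2m$. This is precisely the ``all at once'' statement of Lemma \ref{2rin}, so restricting to the subset $\{v\}$ gives $r(G'\setminus v)=r(G')-2=\sum_{i=1}^{k-1}r(\breve{B_i})+2(m-1)$. Substituting into the consequence of Theorem \ref{+2} above,
$$r(G)=r(\breve{B_k})+r(G'\setminus v)+2=\sum_{i=1}^{k}r(\breve{B_i})+2m,$$
which closes the induction.

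The one genuinely delicate point is the second case of the verification: deleting the interior of the pendant block can turn $v$ into an ordinary vertex of $B_{k-1}$, so the cut-vertex bookkeeping shifts and the decomposition $G\setminus(\text{all cut-vertices})=\bigsqcup\breve{B_i}$ must be re-examined for the enlarged noncut-part $\breve{B_{k-1}}\cup\{v\}$. Lemma \ref{2rin} is exactly the tool that re-establishes the rank hypothesis there, absorbing the lost factor of $2$ from $m_{k-1}$ into the enlarged block; handling this case correctly is where I expect the main effort to lie.
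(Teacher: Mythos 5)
Your proof is correct. It rests on the same two ingredients as the paper's argument---Theorem \ref{+2} applied to a pendant piece, and Lemma \ref{2rin} used to propagate the hypothesis $r(B_i)=r(\breve{B_i})+2m_i$ to partial deletions of cut-vertices---but it organizes the recursion differently. The paper peels off cut-vertices one at a time: at each stage $G^i$ it locates a block-induced pendant subdigraph $B$ with a single remaining cut-vertex $v$ of $G$, observes via Lemma \ref{2rin} that $r(B)=r(B\setminus v)+2$, and applies Theorem \ref{+2}; after $m$ steps only $\bigsqcup_i \breve{B_i}$ remains. You instead induct on the number of blocks, deleting the interior $\breve{B_k}$ of one pendant block per step, which forces you to track whether the attaching cut-vertex $v$ survives as a cut-vertex of $G'=G\setminus \breve{B_k}$ and to re-establish the rank hypothesis for the enlarged noncut-part $\breve{B_{k-1}}\cup\{v\}$---exactly the point where you correctly invoke Lemma \ref{2rin}, and again at the end to convert $r(G')$ into $r(G'\setminus v)$. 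Your version makes explicit the bookkeeping that the paper's four-line proof leaves to the reader (existence at every stage of a pendant piece with the required rank drop, and the shifting cut-vertex structure), at the cost of an extra case analysis; both routes are valid and of comparable depth.
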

\begin{proof}
Let $G^1,\ldots,G^m$ be  subdigraph of $G$ such that $G^{i+1}$ is obtained from $G^{i}$ by deleting exactly one cut vertex of
$G$ for $i=1,\ldots,m$. In $G,$ for $i=1,\hdots,k,$ $r(\breve{B_i})=r(B_i)-2m_i$. Using Lemma \ref{2rin}, for $i=1,\hdots,k$ removal of $r (\leq m_i)$ cut-vertices from $B_i$ decrease its rank by $2r$. Hence in $G^i$ any pendant subdigraph $B$ with the cut-vertex $v$ of $G$ satisfies $r(B)=r(B\setminus v)+2.$  Applying Theorem \ref{+2} for a block induced pendant subdigraph of each $G^{i}$, the result follows.
\end{proof}

\begin{definition} \textbf{$r_2$-block:} An $r_2$-block of a digraph is a pendant block $B$, with the cut-vertex $v,$ such that, $r(B)=r(B\setminus v)+2.$
\end{definition}

\begin{definition} \textbf{$r_2$-digraph:} A digraph is called $r_2$-digraph if at each cut-vertex of the digraph there is an $r_2$-block.
\end{definition}

Notice that any digraph can be extended to an $r_2$-digraph by adding $r_2$-blocks at all the cut-vertices. Moreover, an $r_2$-digraph can be extended to a higher order $r_2$-digraph by coalescing $r_2$-blocks at arbitrary vertices. Thus a nonseparable digraph can also be converted to an $r_2$-digraph. The smallest $r_2$-block in a digraph could be an $\tilde{nc}$-edge. A digraph in Figure \ref{bal} is extended to an $r_2$-digraph in Figure \ref{r2di} by attaching $\tilde{nc}$-edges at the cut-vertices.

\begin{theorem} \label{pen}
Let $G$ be an $r_2$-digraph having $k$ blocks, $B_1, B_2,\hdots, B_k,$ and $m$ cut-vertices. Then  $$r(G)=\sum_{i=1}^k r(\breve{B_i})+2m.$$
\end{theorem}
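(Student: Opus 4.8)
The plan is to compute $r(G)$ by deleting the $m$ cut-vertices one at a time, using the pendant $r_2$-blocks to certify a rank drop of exactly $2$ at each deletion, and to identify the terminal digraph at the end of the process. Let $C$ be the set of all cut-vertices of $G$. Since every noncut-vertex lies in exactly one block and any two adjacent vertices share a common block, noncut-vertices belonging to different blocks are non-adjacent; hence deleting $C$ leaves the disjoint union $G\setminus C=\bigsqcup_{i=1}^{k}\breve{B_i}$, and consequently $r(G\setminus C)=\sum_{i=1}^{k}r(\breve{B_i})$. It therefore suffices to prove $r(G)=r(G\setminus C)+2m$.

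Next I would establish a one-vertex reduction. As $G$ is an $r_2$-digraph, each cut-vertex $v$ carries a pendant $r_2$-block $B_v$; because a pendant block has a unique cut-vertex and a noncut-vertex lies in only one block, the blocks $\{B_v\}_{v\in C}$ are pairwise vertex-disjoint. Fixing $v$ and taking $H=B_v$ in Theorem \ref{+2} (whose hypothesis holds since the pendant condition forbids arcs between $B_v\setminus v$ and $G\setminus B_v$) gives $r(G)=r(B_v\setminus v)+r(G\setminus B_v)+2$. Combining this with the disjoint decomposition $G\setminus v=(B_v\setminus v)\sqcup(G\setminus B_v)$, which yields $r(G\setminus v)=r(B_v\setminus v)+r(G\setminus B_v)$, I obtain the clean identity $r(G)=r(G\setminus v)+2$.

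Then I would iterate. Writing $G^{(0)}=G$ and $G^{(l)}=G^{(l-1)}\setminus v_l$, the block $B_{v_l}$ is left intact by the previous deletions, since its vertices other than $v_l$ are noncut-vertices of $G$ and so are never among $v_1,\dots,v_{l-1}$; consequently in $G^{(l-1)}$ it is still an induced subdigraph separated from the remainder across $v_l$, with $r(B_{v_l})=r(B_{v_l}\setminus v_l)+2$ unchanged. The one-vertex reduction then gives $r(G^{(l-1)})=r(G^{(l)})+2$ at every step, and telescoping over $l=1,\dots,m$ produces $r(G)=r(G\setminus C)+2m=\sum_{i=1}^{k}r(\breve{B_i})+2m$. (Alternatively, one could invoke Lemma \ref{2rin} to pass from the individual one-vertex statements to the simultaneous deletion of $C$.)

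The step I expect to be the main obstacle is the inductive bookkeeping. After some deletions the vertex $v_l$ may no longer be a cut-vertex of $G^{(l-1)}$, so Theorem \ref{+2}, as phrased for a cut-vertex, is not literally applicable. I would resolve this by noting that the conclusion of Theorem \ref{+2} uses only the bordered block-diagonal form forced by the absence of arcs between $B_{v_l}\setminus v_l$ and the remainder, a separation that persists under deletion of external vertices and is independent of whether $v_l$ is a cut-vertex; equivalently, once $v_l$ stops being a cut-vertex the surviving $B_{v_l}$ is a full connected component of $G^{(l-1)}$ and the rank drop of $2$ follows from additivity of rank over components together with $r(B_{v_l})=r(B_{v_l}\setminus v_l)+2$. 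I would also verify that $B_{v_l}$ remains pendant throughout: a private noncut-vertex of $B_{v_l}$ cannot become a cut-vertex when only external vertices are removed, because $B_{v_l}$ is nonseparable and its interior meets no other block, which is precisely what keeps the separation hypothesis valid at every step.
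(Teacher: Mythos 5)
Your proof is correct and takes essentially the same route as the paper: the paper's entire proof is the sentence ``apply Theorem \ref{+2} for each $r_2$-block of $G$,'' and your argument is precisely that iteration, with the bookkeeping the paper leaves implicit (pairwise disjointness of the chosen $r_2$-blocks, persistence of the separation hypothesis as cut-vertices are deleted, and the identification of $G$ minus all cut-vertices with $\bigsqcup_{i=1}^{k}\breve{B_i}$) spelled out explicitly.
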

\begin{proof}
The proof follows by using Theorem \ref{+2} for each $r_2$-block of $G$.
\end{proof}

\begin{corollary} Adding loops or weights on the loops at the cut-vertices does not change the rank of an $r_2$-digraph. \end{corollary}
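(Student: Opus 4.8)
The plan is to reduce everything to the rank formula of Theorem \ref{pen}, namely $r(G)=\sum_{i=1}^k r(\breve{B_i})+2m$, and to argue that none of the data on the right-hand side is affected by loops placed at cut-vertices. Let $G'$ be the digraph obtained from the $r_2$-digraph $G$ by adding loops, or altering existing loop weights, at some collection of its cut-vertices. My first step is to observe that $G'$ has exactly the same underlying simple graph as $G$, since a loop contributes nothing to the underlying simple graph; in particular $G'$ has the same blocks, the same cut-vertices, and the same number $m$ of cut-vertices as $G$.

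The crucial step is to verify that $G'$ is again an $r_2$-digraph, so that Theorem \ref{pen} applies to it. Fix a cut-vertex $v$ of $G$ and let $B$ be an $r_2$-block at $v$, so that $r(B)=r(B\setminus v)+2$. Writing $A(B)$ with the row and column of $v$ first gives a matrix of the form $\left[\begin{smallmatrix}\alpha & x^T\\ y & A(B\setminus v)\end{smallmatrix}\right]$, in which $\alpha$ is precisely the loop weight at $v$. By Remark \ref{rmk1} (equivalently, part 1 of Theorem \ref{hy}), the condition $r(B)=r(B\setminus v)+2$ is equivalent to $x^T\notin rs(A(B\setminus v))$ and $y\notin cs(A(B\setminus v))$, and this condition does not involve $\alpha$. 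Hence changing the loop at $v$ keeps $B$ an $r_2$-block, and carrying this out at every cut-vertex shows that $G'$ is still an $r_2$-digraph.

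It then remains to compare the two rank formulas. Since $\breve{B_i}$ is by definition the subdigraph induced on the \emph{noncut}-vertices of $B_i$, no loop sitting at a cut-vertex appears in any $A(\breve{B_i})$; thus each $r(\breve{B_i})$ is identical for $G$ and $G'$. Combined with the equality $m(G')=m(G)$ from the first step, Theorem \ref{pen} yields $r(G')=\sum_{i=1}^k r(\breve{B_i})+2m=r(G)$, which is the assertion.

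The only real obstacle is the middle step, ensuring that the $r_2$-property survives the modification, and this is handled cleanly by Remark \ref{rmk1}, whose entire content is the $\alpha$-independence of the rank-jump-by-two condition. Once that is in place the conclusion is immediate, since loops at cut-vertices are literally absent from the data $\{r(\breve{B_i})\}$ and $m$ that determine the rank through Theorem \ref{pen}.
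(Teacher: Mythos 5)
Your proposal is correct and follows essentially the same route as the paper: the paper's one-line proof simply invokes Remark \ref{rmk1} and the fact that loop weights at cut-vertices play no role in Theorems \ref{+2} and \ref{pen}, and your argument is just a careful expansion of that (checking the $r_2$-property survives via Remark \ref{rmk1}, then noting the data $\{r(\breve{B_i})\}$ and $m$ in Theorem \ref{pen} are untouched). No differences in substance.
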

\begin{proof}
By Remark \ref{rmk1}, the weights of the loops at the cut-vertices did not play any role in Theorem \ref{+2} and Theorem \ref{pen}, and hence the result follows.
\end{proof}

\begin{theorem}\label{genr2}
Let $G$ be an $r_2$-digraph with $m$ cut-vertices. Consider $s$ digraphs $W_1, W_2,\hdots, W_s$. Let an $\tilde{nc}$-edge be added between one arbitrary vertex of $W_i$ and one arbitrary cut-vertex of $G$ for $i=1,\hdots s$. Let $G'$ be the resulting digraph. Then $$r(G')=\sum_{i=1}^{k}r(\breve{B})+\sum_{i=1}^{s}r(W_i)+2m.$$
\end{theorem}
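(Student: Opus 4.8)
The plan is to reduce the statement to a single attachment and then exploit the $r_2$-block that $G$ already carries at the relevant cut-vertex. First I would record that, by Theorem \ref{pen}, $\sum_{i=1}^{k} r(\breve{B_i}) + 2m = r(G)$, so the claimed identity (reading $\breve{B}$ as $\breve{B_i}$) is equivalent to
\[
r(G') = r(G) + \sum_{i=1}^{s} r(W_i).
\]
I would prove this by induction on $s$, the number of attached digraphs. The base case $s=0$ is exactly Theorem \ref{pen}. For the inductive step, writing $G''$ for the digraph obtained from $G$ by performing only the first $s-1$ attachments, the induction hypothesis gives $r(G'') = r(G) + \sum_{i=1}^{s-1} r(W_i)$, and it remains to show that attaching the single digraph $W_s$ via an $\tilde{nc}$-edge at the cut-vertex $v_s$ of $G$ raises the rank by exactly $r(W_s)$.

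For this single-attachment step the key move is to peel off \emph{not} the newly attached piece but the $r_2$-block of $G$ sitting at $v_s$. Since $W_1,\dots,W_{s-1}$ were attached only at cut-vertices of $G$, the noncut-vertices of that $r_2$-block $B$ remain noncut-vertices, so $B$ is still a pendant block of $G''$ (and of $G'$) at $v_s$ with $r(B) = r(B\setminus v_s) + 2$. Applying Theorem \ref{+2} to $G'$ with cut-vertex $v_s$ and $H = B$ yields $r(G') = r(B\setminus v_s) + r(G'\setminus B) + 2$. The crucial observation is that $W_s$ is joined to the rest of $G'$ only through $v_s$; hence deleting $V(B)$, which contains $v_s$, severs the $\tilde{nc}$-edge and leaves $W_s$ as a separate component, so $G'\setminus B = (G''\setminus B)\sqcup W_s$ and $r(G'\setminus B) = r(G''\setminus B) + r(W_s)$. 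Applying Theorem \ref{+2} once more, this time to $G''$ with the same block $B$, gives $r(G'') = r(B\setminus v_s) + r(G''\setminus B) + 2$, and comparing the two identities yields $r(G') = r(G'') + r(W_s)$, completing the induction.

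The step I expect to be the genuine obstacle, and the reason the proof must route through the $r_2$-block rather than through $W_s$ directly, is that one cannot simply apply Theorem \ref{+2} with $H$ taken to be $W_s$ together with its connecting $\tilde{nc}$-edge and the vertex $v_s$: the hypothesis $r(H) = r(H\setminus v_s) + 2$ can fail. Indeed, the new row and column contributed by the $\tilde{nc}$-edge each have a single nonzero entry, in the position of the attaching vertex $u_s$, and this row (column) may already lie in the row space (column space) of $A(W_s)$, in which case Theorem \ref{hy} forces an increment of only $1$. Peeling off the $r_2$-block instead sidesteps this difficulty entirely, reducing the only nontrivial checks to the two bookkeeping facts used above: that $B$ remains a pendant block after the attachments, and that deleting its vertex set disconnects $W_s$ from the remainder. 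Both are immediate because every attachment is made at a cut-vertex of $G$ via an $\tilde{nc}$-edge.
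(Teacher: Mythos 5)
Your proof is correct and takes essentially the same route as the paper's, which simply invokes Theorem \ref{+2} for each $r_2$-block of $G$: the rank identity is obtained by peeling off the $r_2$-block sitting at each attachment cut-vertex rather than the attached digraph $W_i$ itself. Your write-up is in fact more detailed than the paper's one-line proof, and your closing remark---that one cannot take $H$ to be $W_s$ together with the $\tilde{nc}$-edge and $v_s$, since $r(H)=r(W_s)+2$ may fail---correctly pinpoints why the argument must route through the $r_2$-blocks of $G$.
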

\begin{proof}
The proof follows by Theorem \ref{+2} for each $r_2$-block of $G$.
\end{proof}

\begin{corollary} \label{cr2}
Let $G$ be an $r_2$-digraph. On adding edges or arcs to its cut-vertices, the rank increases by the number of $nc$-edges or $nc$-arcs. Thus the rank is unchanged by the addition of simple edges at its cut-vertices.  \end{corollary}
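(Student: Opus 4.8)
The plan is to recognize the described operation as the attachment of new pendant vertices at the cut-vertices of $G$, and then to invoke Theorem \ref{pen} directly on the enlarged digraph rather than peeling blocks off one at a time (Theorem \ref{+2} alone will not suffice, since an $nc$-edge, an $nc$-arc, or an $\tilde{nc}$-arc is not an $r_2$-block). Concretely, write $G'$ for the digraph obtained from $G$ by attaching, at cut-vertices of $G$, new pendant vertices $u_1,\dots,u_p$, where each $u_j$ is joined to a cut-vertex $v_j$ by an $nc$-edge, $nc$-arc, $\tilde{nc}$-edge, or $\tilde{nc}$-arc. First I would check that this operation does not disturb the hypotheses of Theorem \ref{pen}: since every $u_j$ is a leaf and all additions are made at cut-vertices, no new cut-vertex is created and no noncut-vertex of $G$ becomes a cut-vertex, so $G'$ has exactly the same $m$ cut-vertices as $G$. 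At each of these the $r_2$-block of $G$ survives unchanged as a pendant block satisfying $r(B)=r(\breve{B})+2$, so $G'$ is again an $r_2$-digraph.

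Next I would describe the block structure of $G'$. Its blocks are the original blocks $B_1,\dots,B_k$ together with one new pendant block $P_j=\{v_j,u_j\}$ per added edge or arc. Because the set of cut-vertices is unchanged, each $\breve{B_i}$ is the same subdigraph in $G'$ as in $G$, while the new blocks contribute $\breve{P_j}=\{u_j\}$, a single vertex. This vertex carries a loop precisely when the attachment is an $nc$-edge or an $nc$-arc, so $r(\breve{P_j})=1$ if $u_j$ has a loop and $r(\breve{P_j})=0$ otherwise.

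Applying Theorem \ref{pen} to $G'$ and to $G$ then gives
$$r(G')=\sum_{i=1}^{k} r(\breve{B_i})+\sum_{j=1}^{p} r(\breve{P_j})+2m=r(G)+\sum_{j=1}^{p} r(\breve{P_j}),$$
and since $\sum_{j} r(\breve{P_j})$ counts exactly the attachments whose new vertex carries a loop, the increase equals the number of $nc$-edges plus the number of $nc$-arcs. As a simple edge attaches a loopless vertex (an $\tilde{nc}$-edge), each such addition contributes $0$ to the sum, which yields the final sentence of the statement.

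The hard part will be the bookkeeping in the first two steps, namely the verification that $G'$ remains an $r_2$-digraph with exactly the same cut-vertex set and the same subdigraphs $\breve{B_i}$, since this is precisely what lets Theorem \ref{pen} be applied verbatim to both $G$ and $G'$. Once that is settled, the rank count is immediate, and the whole $nc$ versus $\tilde{nc}$ dichotomy collapses to the single question of whether the appended vertex carries a loop, i.e.\ whether $r(\breve{P_j})$ equals $1$ or $0$.
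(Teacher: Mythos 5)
Your argument is correct, and it is a mild but genuine variant of the paper's route. The paper disposes of this corollary in one line by citing Theorem \ref{genr2} (attach digraphs $W_i$ to cut-vertices and add $\sum_i r(W_i)$ to the rank), whereas you instead observe that the enlarged digraph $G'$ is itself an $r_2$-digraph with the same cut-vertex set and the same subdigraphs $\breve{B_i}$, apply Theorem \ref{pen} to both $G$ and $G'$, and read off the difference as $\sum_j r(\breve{P_j})$. Since Theorems \ref{pen} and \ref{genr2} are siblings, both proved by iterating Theorem \ref{+2} over the $r_2$-blocks, the underlying machinery is identical; what your version buys is explicitness. You actually verify the hypotheses that make the cited theorem applicable (no new cut-vertices, the $r_2$-blocks of $G$ survive as pendant blocks of $G'$), you handle single arcs as well as edges uniformly (Theorem \ref{genr2} as stated only speaks of $\tilde{nc}$-edges, and attaching a looped vertex by an ``$\tilde{nc}$-edge'' actually conflicts with the paper's definition, which forbids a loop on $u$), and you reduce the whole $nc$ versus $\tilde{nc}$ dichotomy to whether $r(\breve{P_j})$ is $1$ or $0$. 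Your parenthetical observation that an $nc$-edge, $nc$-arc, or $\tilde{nc}$-arc is not an $r_2$-block is also correct and is exactly why one cannot simply peel the new pendant blocks off with Theorem \ref{+2}; this point is glossed over in the paper. No gap.
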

\begin{proof}
The proof directly follows by Theorem \ref{genr2}.
\end{proof}

We will now see some more categories of tree digraphs.
\begin{definition}\textbf{$r_2$-Tree Digraph:} Let $T$ be a cutloop bi-arc tree. If at each cut-vertex of $T$ there exists an $\tilde{nc}$-edge then it is called an $r_2$-tree digraph.
\end{definition}

\begin{corollary}
Let $T$ denote an $r_2$-tree digraph with $s$ $nc$-edges or $nc$-arcs. If $q$ is the maximum matching in $T$, then $r(T)=2q+s.$
\end{corollary}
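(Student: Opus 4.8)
The plan is to reduce the statement to the block-rank formula of Theorem \ref{pen} and then show that the term $2m$ appearing there is exactly $2q$.

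First I would check that $T$, being an $r_2$-tree digraph, is an $r_2$-digraph in the sense required by Theorem \ref{pen}. The defining $\tilde{nc}$-edge at each cut-vertex is an $r_2$-block: its matrix has the form $\left[\begin{smallmatrix} \alpha & \alpha_1 \\ \alpha_2 & 0\end{smallmatrix}\right]$ with $\alpha_1,\alpha_2$ nonzero, hence determinant $-\alpha_1\alpha_2\neq 0$ and rank $2$, while deleting the cut-vertex leaves a single loopless vertex of rank $0$. So Theorem \ref{pen} applies and yields $r(T)=\sum_{i=1}^{k} r(\breve{B_i})+2m$, where the blocks $B_i$ are exactly the edges of the underlying tree and $m$ is the number of cut-vertices.

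Next I would evaluate the block sum. Each block is a single edge with at most one noncut-vertex, so $\breve{B_i}$ is either the null graph or one vertex. It is the null graph (rank $0$) when $B_i$ joins two cut-vertices, and a single loopless vertex (rank $0$) when $B_i$ is a $\tilde{nc}$-edge; it is a single vertex carrying a loop (rank $1$) precisely when $B_i$ is an $nc$-edge or an $nc$-arc. Therefore $\sum_i r(\breve{B_i})$ counts exactly the $nc$-edges and $nc$-arcs, giving $\sum_i r(\breve{B_i})=s$, so $r(T)=s+2m$.

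The remaining and main step is the combinatorial identity $q=m$. For $q\le m$ I would use that in a tree with at least two edges every edge has at least one endpoint that is an internal (cut-)vertex; since the pairs of arcs in a matching are vertex-disjoint, distinct matching edges claim distinct cut-vertices, so any matching has at most $m$ pairs. For $q\ge m$ I would invoke the defining property of an $r_2$-tree digraph: each cut-vertex carries a pendant $\tilde{nc}$-edge whose leaf is a noncut-vertex of degree one, hence distinct for distinct cut-vertices; matching every cut-vertex to its own pendant leaf produces $m$ vertex-disjoint pairs of arcs. Combining the two bounds gives $q=m$, whence $r(T)=2q+s$. The delicate point to get right here is that single arcs ($nc$-arcs and $\tilde{nc}$-arcs) cannot enter a matching under the paper's definition, so they neither assist the lower bound nor break the upper bound — which is exactly why they contribute to $s$ but not to $q$.
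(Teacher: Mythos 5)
Your proof is correct but follows a genuinely different route from the paper's. The paper's proof is a one-liner: it combines Theorem \ref{tt}, which gives $r=2q$ for the loopless bi-arc tree underlying $T$ via an induction that peels off pendant matching edges, with Corollary \ref{cr2}, by which each $nc$-edge or $nc$-arc attached at a cut-vertex raises the rank by exactly one; on that route the matching number never has to be compared with the number of cut-vertices. You instead go through the block formula of Theorem \ref{pen}, compute $\sum_i r(\breve{B_i})=s$ by classifying the at-most-one-vertex digraphs $\breve{B_i}$, and then prove the combinatorial identity $q=m$ --- a fact the paper never states but which is arguably the real content of the corollary, since it explains why the $2m$ of the block formula equals the $2q$ of the matching formula. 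Both halves of your $q=m$ argument (each matching pair occupies a distinct cut-vertex; the pendant $\tilde{nc}$-edges themselves form a matching of size $m$) are sound, and your remark that single arcs cannot enter a matching under the paper's definition of matching is exactly the right point to isolate. The only thing to patch is the degenerate case of a tree with at most one edge: for a single bi-arc edge there are no cut-vertices, so $q=1\neq 0=m$ and the unique block $B_1$ has two noncut-vertices, giving $\sum_i r(\breve{B_i})=2\neq s$; the two discrepancies cancel and $r(T)=2=2q+s$ still holds, but this case needs a separate sentence rather than falling under your general argument, which explicitly assumes at least two edges.
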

\begin{proof}
It follows by Theorem \ref{tt} and Corollary \ref{cr2}.
\end{proof}

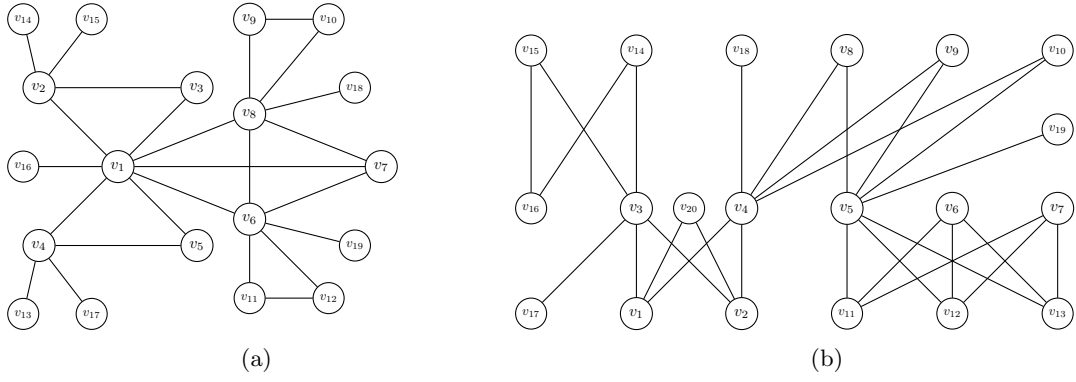
\begin{figure}

     \begin{subfigure}[b]{0.4\textwidth}
\begin{tikzpicture} [scale=0.7] [->,>=stealth',shorten >=1pt,auto,node distance=4cm,
                thick,main node/.style={circle,draw,font=\Large\bfseries}]

 \draw  node[draw,circle,scale=0.60] (1) at (0,0) {$v_1$};
 \draw  node[draw,circle,scale=0.60] (2) at (-1.5,1.5) {$v_2$};
 \draw  node[draw,circle,scale=0.60] (3) at (1.5,1.5) {$v_3$};
 \draw  node[draw,circle,scale=0.60] (4) at (-1.5,-1.5) {$v_4$};
\draw  node[draw,circle,scale=0.60] (5) at (1.5,-1.5) {$v_5$};
\draw  node[draw,circle,scale=0.60] (6) at (2.5,-1) {$v_6$};
\draw  node[draw,circle,scale=0.60] (8) at (2.5,1) {$v_8$};
\draw  node[draw,circle,scale=0.60] (7) at (5,0) {$v_7$};
\draw  node[draw,circle,scale=0.60] (9) at (2.5,2.8) {$v_9$};
\draw  node[draw,circle,scale=0.50] (10) at (4,2.8) {$v_{10}$};

\draw  node[draw,circle,scale=0.50] (11) at (2.5,-2.5) {$v_{11}$};
\draw  node[draw,circle,scale=0.50] (12) at (4,-2.5) {$v_{12}$};

\draw  node[draw,circle,scale=0.50] (13) at (-1.8,-2.8) {$v_{13}$};

\draw  node[draw,circle,scale=0.50] (14) at (-1.8,2.8) {$v_{14}$};

\draw  node[draw,circle,scale=0.50] (15) at (-0.5,2.8) {$v_{15}$};
\draw  node[draw,circle,scale=0.50] (16) at (-1.8,0) {$v_{16}$};
\draw  node[draw,circle,scale=0.50] (17) at (-0.5,-2.8) {$v_{17}$};
\draw  node[draw,circle,scale=0.50] (18) at (4.5,1.5) {$v_{18}$};
\draw  node[draw,circle,scale=0.50] (19) at (4.5,-1.5) {$v_{19}$};

\tikzset{edge/.style = {- = latex'}}
\draw[edge] (1) to (3);
\draw[edge] (3) to (2);
\draw[edge] (2) to (1);

\draw[edge] (2) to (15);
\draw[edge] (16) to (1);
\draw[edge] (8) to (6);
\draw[edge] (6) to (19);
\draw[edge] (10) to (8);
\draw[edge] (12) to (11);
\draw[edge] (12) to (6);
\draw[edge] (11) to (6);
\draw[edge] (13) to (4);
\draw[edge] (2) to (14);

\draw[edge] (1) to (5);
\draw[edge] (5) to (4);
\draw[edge] (4) to (1);

\draw[edge] (1) to (8);
\draw[edge] (8) to (7);
\draw[edge] (7) to (6);
\draw[edge] (6) to (1);

\draw[edge] (8) to (9);
\draw[edge] (10) to (9);
\draw[edge] (1) to (7);
\draw[edge] (18) to (8);
\draw[edge] (17) to (4);

\end{tikzpicture}
     \caption{}\label{r2block}
    \end{subfigure}    \begin{subfigure}[b]{0.5\textwidth}
 \begin{tikzpicture}[scale=0.7] [->,>=stealth',shorten >=1pt,auto,node distance=4cm,
                thick,main node/.style={circle,draw,font=\Large\bfseries}]
                       \tikzset{vertex/.style = {shape=circle,draw}}
 \tikzset{edge/.style = {- = latex'}}
\draw  node[draw,circle,scale=0.60] (1) at (0,0) {$v_1$};
\draw  node[draw,circle,scale=0.60] (2) at (2,0) {$v_2$};
\draw  node[draw,circle,scale=0.60] (3) at (0,2) {$v_3$};
\draw  node[draw,circle,scale=0.60] (4) at (2,2) {$v_4$};

\draw  node[draw,circle,scale=0.50] (20) at (1,2) {$v_{20}$};

\draw  node[draw,circle,scale=0.60] (5) at (4,2) {$v_5$};
\draw  node[draw,circle,scale=0.60] (6) at (6,2) {$v_6$};
\draw  node[draw,circle,scale=0.60] (7) at (8,2) {$v_7$};
\draw  node[draw,circle,scale=0.60] (8) at (4,5) {$v_8$};
\draw  node[draw,circle,scale=0.60] (9) at (6,5) {$v_9$};
\draw  node[draw,circle,scale=0.50] (10) at (8,5) {$v_{10}$};

\draw  node[draw,circle,scale=0.50] (11) at (4,0) {$v_{11}$};
\draw  node[draw,circle,scale=0.50] (12) at (6,0) {$v_{12}$};
\draw  node[draw,circle,scale=0.50] (13) at (8,0) {$v_{13}$};

\draw  node[draw,circle,scale=0.50] (14) at (0,5) {$v_{14}$};

\draw  node[draw,circle,scale=0.50] (15) at (-2,5) {$v_{15}$};

\draw  node[draw,circle,scale=0.50] (16) at (-2,2) {$v_{16}$};

\draw  node[draw,circle,scale=0.50] (17) at (-2,0) {$v_{17}$};
\draw  node[draw,circle,scale=0.50] (18) at (2,5) {$v_{18}$};
\draw  node[draw,circle,scale=0.50] (19) at (8,3.5) {$v_{19}$};

 \draw[edge] (1) to (3);
 \draw[edge] (1) to (4);
 \draw[edge] (2) to (3);
 \draw[edge] (2) to (4);
 \draw[edge] (4) to (8);
 \draw[edge] (4) to (9);
 \draw[edge] (4) to (10);
 \draw[edge] (5) to (8);
 \draw[edge] (5) to (9);
 \draw[edge] (5) to (10);
 \draw[edge] (1) to (20);
 \draw[edge] (2) to (20);

 \draw[edge] (5) to (11);
 \draw[edge] (5) to (12);
 \draw[edge] (5) to (13);
 \draw[edge] (6) to (11);
 \draw[edge] (6) to (12);
 \draw[edge] (6) to (13);
 \draw[edge] (7) to (11);
 \draw[edge] (7) to (12);
 \draw[edge] (7) to (13);
 \draw[edge] (3) to (14);
 \draw[edge] (3) to (15);
 \draw[edge] (16) to (15);
 \draw[edge] (14) to (16);

 \draw[edge] (3) to (17);
 \draw[edge] (4) to (18);
 \draw[edge] (5) to (19);
 \end{tikzpicture}
  \caption{}\label{r2biblock}
    \end{subfigure}

\caption{(a) $r_2$-block graph. (b) $r_2$-biblock graph.} \label{r2fig}

\end{figure}

  A simple graph is called a \emph{block graph} if each of its blocks is a simple complete graph. When each block of a simple graph is simple complete bipartite graph then it is called a \emph{biblock graph}.  A block graph is called an \emph{$r_2$-block graph} if there exists a simple pendant edge at each of its cut-vertices. An example of an $r_2$-block graph is given in Figure \ref{r2block}. Similarly, A biblock graph is called an \emph{$r_2$-biblock graph} if there exists a simple pendant edge at each of its cut-vertices. An example of an $r_2$-biblock graph is given in Figure \ref{r2biblock}.

\begin{corollary}\cite{singh2018nonsingular}(Theorem 3.5)
Let $G$ be an $r_2$-block graph with $m$ cut-vertices and $k$ blocks. Choose $m$ pendant edges, one at each cut-vertex. If each of the remaining $n-k$ blocks has at least two noncut-vertices, then $G$ is nonsingular.  
\end{corollary}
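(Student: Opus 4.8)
The plan is to recognize $G$ as an $r_2$-digraph, apply the rank formula of Theorem~\ref{pen}, and then evaluate every term by a vertex count. First I would verify the hypotheses of Theorem~\ref{pen}. Each block of the block graph $G$ is a simple complete graph, hence a symmetric (``undirected'') block in the digraph sense, and by the definition of an $r_2$-block graph there is a simple pendant edge at each of the $m$ cut-vertices. Such a chosen pendant edge is a copy of $K_2$ on $\{v,u\}$ with $v$ the cut-vertex and $u$ a pendant (hence noncut) vertex; since $K_2$ has rank $2$ while $K_2\setminus v$ is a single loopless vertex of rank $0$, we have $r(K_2)=r(K_2\setminus v)+2$, so each chosen pendant edge is an $r_2$-block at its cut-vertex. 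Therefore $G$ is an $r_2$-digraph and Theorem~\ref{pen} yields
$$r(G)=\sum_{i=1}^{k} r(\breve{B_i})+2m,$$
where $\breve{B_i}$ is the induced subgraph of $B_i$ on its noncut-vertices.

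Next I would compute $r(\breve{B_i})$ block by block. For each of the $m$ chosen pendant edges, $\breve{B_i}$ is a single isolated vertex, so $r(\breve{B_i})=0$. For every remaining block $B_i=K_t$, the induced subgraph on its noncut-vertices is again a complete graph $K_s$, where $s\ge 2$ by hypothesis; since the adjacency matrix of $K_s$ is $J-I$ (all-ones minus identity), with eigenvalues $s-1$ and $-1$ (the latter of multiplicity $s-1$), all nonzero when $s\ge 2$, the graph $\breve{B_i}$ is nonsingular and $r(\breve{B_i})=s$ is exactly the number of noncut-vertices of $B_i$.

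I would then finish with a counting argument. Writing $n=|V(G)|$, every noncut-vertex of $G$ belongs to exactly one block. The $m$ chosen pendant edges hold exactly $m$ noncut-vertices among them yet contribute $0$ to the sum, while the remaining $k-m$ blocks hold all the other $(n-m)-m=n-2m$ noncut-vertices, each contributing its own noncut count. Hence $\sum_{i=1}^{k} r(\breve{B_i})=n-2m$, and
$$r(G)=(n-2m)+2m=n,$$
so $A(G)$ has full rank and $G$ is nonsingular.

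The computation itself is short, so the main point to get right is the bookkeeping and, in particular, seeing why ``at least two noncut-vertices'' is exactly the condition needed. If a non-pendant block had only one noncut-vertex, then $\breve{B_i}=K_1$ would have rank $0$ instead of $1$; if it had none, $\breve{B_i}$ would be empty. In either case the sum $\sum_{i=1}^{k} r(\breve{B_i})$ would fall below $n-2m$, the rank would drop below $n$, and $G$ would be singular. Thus the hypothesis is precisely what guarantees that each $\breve{B_i}$ of a non-pendant block is a nonsingular complete graph, so that the $2m$ supplied by the pendant $r_2$-blocks brings the total rank up to the full order $n$.
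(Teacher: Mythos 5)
Your proposal is correct, and it follows exactly the route the paper intends: the paper states this corollary without proof (citing the external reference), but its proof of the adjacent $r_2$-biblock corollary is the same argument --- apply Theorem~\ref{pen}, evaluate each $r(\breve{B_i})$, and sum. Your additional bookkeeping (each chosen pendant edge is an $r_2$-block with $r(\breve{B_i})=0$, each remaining block contributes a nonsingular $K_s$ with $s\ge 2$, and the noncut-vertices partition over the blocks to give $\sum_i r(\breve{B_i})=n-2m$ and hence $r(G)=n$) is accurate and fills in precisely what the paper leaves implicit.
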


\begin{corollary}
Let $G$ be an $r_2$-biblock graph with $m$ cut-vertices and $k$ blocks. Choose $m$ pendant edges, one at each cut-vertex. If each of the remaining $n-k$ blocks has at least two noncut-vertices in different partition, then $r(G)=2k.$
\end{corollary}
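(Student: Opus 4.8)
The plan is to reduce the statement to Theorem \ref{pen}. The first step is to check that an $r_2$-biblock graph is genuinely an $r_2$-digraph. At each cut-vertex the hypothesis supplies a chosen simple pendant edge; as a block this is a copy of $K_2$, whose (symmetric $0$-$1$) adjacency matrix has rank $2$, while deleting the cut-vertex leaves a single loopless noncut-vertex of rank $0$. Hence each such pendant edge $B$ with cut-vertex $v$ satisfies $r(B)=r(B\setminus v)+2$, i.e. it is an $r_2$-block. Thus $G$ carries an $r_2$-block at every cut-vertex, so Theorem \ref{pen} applies and $r(G)=\sum_{i=1}^{k} r(\breve{B_i})+2m$.

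It remains to evaluate the $k$ terms $r(\breve{B_i})$, and for this I would split the blocks into two families. Each of the $m$ designated pendant edges is a block whose only noncut-vertex is the endpoint opposite $v$; since this vertex carries no loop, $\breve{B_i}$ is a single isolated vertex and $r(\breve{B_i})=0$. Each of the remaining $k-m$ blocks is a complete bipartite graph $K_{a,b}$, and $\breve{B_i}$ is its induced subdigraph on the noncut-vertices, which is again complete bipartite between the surviving portions of the two partition classes. The hypothesis that such a block keeps at least two noncut-vertices lying in different partition classes guarantees that at least one vertex of each class survives, so $\breve{B_i}\cong K_{p,q}$ with $p,q\ge 1$, and the adjacency matrix of such a complete bipartite graph has rank exactly $2$. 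Hence $r(\breve{B_i})=2$ for each of these $k-m$ blocks.

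Putting the two counts together gives $\sum_{i=1}^{k} r(\breve{B_i})=m\cdot 0+(k-m)\cdot 2=2(k-m)$, whence
\[
r(G)=2(k-m)+2m=2k,
\]
as required.

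The computations are elementary once Theorem \ref{pen} is in hand; the only points demanding care are the bookkeeping and the use of the partition hypothesis. For the bookkeeping I must confirm that the $m$ pendant edges are distinct blocks — which holds because a pendant edge has a unique cut-vertex and we pick one edge per cut-vertex — so that precisely $k-m$ blocks remain to be treated as full complete bipartite blocks. The one genuinely necessary structural observation is that deleting vertices from $K_{a,b}$ drops its rank below $2$ only when an entire partition class is removed; the ``two noncut-vertices in different partition'' assumption is exactly what rules this out and keeps each $r(\breve{B_i})=2$.
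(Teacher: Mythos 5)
Your proof is correct and follows essentially the same route as the paper: both invoke Theorem \ref{pen} and then evaluate $r(\breve{B_i})$ as $0$ for the $m$ chosen pendant edges and $2$ for the remaining $k-m$ complete bipartite blocks (using the hypothesis that each keeps a vertex in each partition class), yielding $r(G)=2(k-m)+2m=2k$. Your explicit check that each pendant edge is an $r_2$-block and your remark on why the partition hypothesis is needed are slightly more detailed than the paper's one-line treatment, but the argument is the same.
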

\begin{proof} Let $B_1,B_2,\hdots, B_m$ be the blocks which are $m$ pendant edges selected one from each cut-vertex of $G$. Let $B_{m+1}, B_{m+2},\hdots, B_k$ be the rest of the blocks in $G$. Using Theorem \ref{pen} $$r(G)=\sum_{i=m+1}^k r(\breve{B_i})+2m.$$
Note that any $\breve{B_i}$ is complete bipartite graph having at least 2 vertices, thus it has rank 2. Hence, $$r(G)=2(k-m)+2m=2k.$$ This completes the proof.
\end{proof}

\section{CASE II} \label{secr0}

Let $G_1$ be any subdigraph of a digraph $G$. Let $v$ be a cut-vertex of $G$ with a loop of weight $\alpha$. By $v[G_1]_{in}$, we denote the column vector consisting of the weights of all the incoming edges from $G_1$ to $v$. Similarly, by $v[G_1]_{out}$, we denote the row vector consisting of weights of all the outgoing edges from $v$ to $G_1$.
\begin{theorem} \label{r_0}
    Let $G$ be a digraph with a cut-vertex $v$. Let $H$ be a nonempty induced  subdigraph which includes $v$ such that there is no arc $(p,q)$ or $(q,p)$, where $p\in V(H\setminus v)$ and  $ q \in V(G\setminus H)$. If $r(H)=r(H\setminus v)$, and any of the following happens,
\begin{enumerate}
\item $\alpha=0.$
\item $v[G\setminus H]_{in}$ or $v[G\setminus H]_{out}$ is independent of $A(G\setminus H)$.
\end{enumerate}
   Then
    $$r(G)=r(H\setminus v)+ r(G\setminus (H\setminus v)).$$
\end{theorem}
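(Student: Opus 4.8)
The plan is to mimic the block-decomposition used in the proof of Theorem \ref{+2}, but now to exploit the hypothesis $r(H)=r(H\setminus v)$ in order to \emph{decouple} the $A(H\setminus v)$ block completely, and then to argue that the one leftover corner entry is immaterial to the rank. Ordering the vertices of $G$ with the cut-vertex $v$ first, then $V(H\setminus v)$, then $V(G\setminus H)$, I would write
\[
A(G)=\begin{bmatrix} \alpha & x^T & w^T \\ y & A(H\setminus v) & O \\ z & O^T & A(G\setminus H)\end{bmatrix},
\]
where $x^T$ and $y$ collect the weights of the arcs from and to $v$ inside $H$, while $w^T=v[G\setminus H]_{out}$ and $z=v[G\setminus H]_{in}$; the two zero blocks record the absence of arcs between $H\setminus v$ and $G\setminus H$. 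Since $r(H)=r(H\setminus v)$, Theorem \ref{hy}(2) applied to $A(H)$ with $B=A(H\setminus v)$ yields $x^T\in rs(A(H\setminus v))$ and $y\in cs(A(H\setminus v))$, so I may fix vectors $c,d$ with $x^T=c^TA(H\setminus v)$ and $y=A(H\setminus v)\,d$.

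The first main step is elimination. Subtracting $c^T$ times the middle block-row from the first row clears $x^T$ (leaving $w^T$ untouched, since the middle block-row is zero in the last columns) and replaces $\alpha$ by $\alpha-c^Ty$; subtracting the middle block-column times $d$ from the first column then clears $y$ without disturbing the corner. This produces a block-diagonal matrix
\[
A(H\setminus v)\ \oplus\ N',\qquad N'=\begin{bmatrix}\alpha-c^Ty & w^T\\ z & A(G\setminus H)\end{bmatrix},
\]
whence $r(G)=r(A(H\setminus v))+r(N')=r(H\setminus v)+r(N')$. Because $N=A(G\setminus(H\setminus v))=\begin{bmatrix}\alpha & w^T\\ z & A(G\setminus H)\end{bmatrix}$ differs from $N'$ only in its $(1,1)$ entry, the entire theorem reduces to the single claim $r(N')=r(N)$.

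This final reduction is the crux, and it is exactly where the two alternative hypotheses enter. Under hypothesis (1), $\alpha=0$: here I would use the column-space condition $\begin{bmatrix}\alpha\\ y\end{bmatrix}\in cs\big(\begin{bmatrix}x^T\\ A(H\setminus v)\end{bmatrix}\big)$ supplied by Theorem \ref{hy}(2) to choose the vector $d$ so that in addition $x^Td=0$; then $c^Ty=c^TA(H\setminus v)\,d=x^Td=0$, so the corner $\alpha-c^Ty=\alpha$ is unchanged and $N'=N$. Under hypothesis (2), say $w^T\notin rs(A(G\setminus H))$ (the case $z\notin cs(A(G\setminus H))$ being symmetric), the corner value does not affect the rank at all: if also $z\notin cs(A(G\setminus H))$, then Theorem \ref{hy}(1) together with Remark \ref{rmk1} gives $r(N')=r(N)=r(A(G\setminus H))+2$; whereas if $z\in cs(A(G\setminus H))$, then Theorem \ref{hy}(3)(b) gives $r(N')=r(N)=r(A(G\setminus H))+1$, as that clause depends only on $w^T$, $z$ and $A(G\setminus H)$ and not on the scalar in the corner. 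In every case $r(N')=r(N)=r(G\setminus(H\setminus v))$, which completes the argument.

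I expect the delicate point to be precisely the corner-entry bookkeeping of the last paragraph: the elimination unavoidably perturbs $\alpha$ by $c^Ty$, and one must check that this perturbation is harmless. Hypothesis (1) annihilates the perturbation outright, while hypothesis (2) guarantees, via Theorem \ref{hy}, that the rank of the bordered matrix $N$ simply does not depend on its $(1,1)$ entry. Recognizing that these two hypotheses together cover the only obstruction is the conceptual heart of the proof.
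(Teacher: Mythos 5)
Your proof is correct and follows essentially the same route as the paper's: block-decompose $A(G)$ around the cut-vertex, use the row/column-space conditions supplied by Theorem \ref{hy}(2) to eliminate the coupling between $A(H\setminus v)$ and the rest by elementary operations, and then observe that under either hypothesis the $(1,1)$ entry of the remaining bordered block $\begin{bmatrix}\ast & w^T\\ z & A(G\setminus H)\end{bmatrix}$ does not affect its rank. The only (immaterial) difference is that the paper invokes the full bordered conditions of Theorem \ref{hy}(2) to make that corner entry exactly $0$ after elimination, whereas you use only $x^T\in rs(A(H\setminus v))$, $y\in cs(A(H\setminus v))$, leave the corner as $\alpha-c^Ty$, and then check separately (correctly, via Theorem \ref{hy}(1) and \ref{hy}(3)(b)) that this perturbation is harmless.
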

\begin{proof}
With suitable relabelling of the vertices, we can write
$$A(G)=\begin{bmatrix}
A(H\setminus v) & x & O\\ y^T & \alpha & w^T \\ O^T & z & A(G \setminus H)
\end{bmatrix}.$$ Notice that by the hypothesis the row vector $[y^T \ \alpha] \in rs([A(H\setminus v) \ x])$. Similarly, the column vector $\begin{bmatrix}
x \\ \alpha \end{bmatrix}\in cs \Bigg(\begin{bmatrix}
A(H\setminus v) \\ y^T
\end{bmatrix}  \Bigg)$.

Using elementary  operations, $A(G)$ can be transformed to the matrix $$\begin{bmatrix}
A(H\setminus v) & o & O\\ o^T & 0 & w^T \\ O^T & z & A(G \setminus H)
\end{bmatrix}.$$ Clearly the result follows for $\alpha=0$.
When $w^T \notin rs(A(G\setminus H))$ or $z \notin cs(A(G\setminus H)),$ then for any value of $\alpha$,
$$r\Bigg(\begin{bmatrix}
 \alpha &  w^T \\ z & A(G \setminus H)
\end{bmatrix}\Bigg)=r\Bigg( \begin{bmatrix}
 0 &  w^T \\ z & A(G \setminus H)
\end{bmatrix} \Bigg).$$ Hence the result follows.
\end{proof}

 \begin{definition} Let $B$ be a block in a digraph. If for every cut-vertex $v$ in $B,$  $r(B)=r(B\setminus v),$ then $B$ is called an $r_0$ block. \end{definition}

 \begin{definition} Let $G$ be a digraph having $k$ blocks. If at least $k-1$ blocks are $r_0$ blocks, then $G$ is called $r_0$-digraphs. \end{definition}

\begin{theorem}\label{r0f}
Let $G$ be a graph having no loops at the cut-vertices. If $G$ is $r_0$-digraph, then $r(G)=\sum_{i=1}^{k} r(B_i)$.
\end{theorem}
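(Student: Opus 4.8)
The plan is to induct on the number of blocks $k$, peeling off one pendant $r_0$-block at a time and invoking Theorem \ref{r_0}. Since the rank of a digraph is the sum of the ranks of its connected components, and the blocks of $G$ distribute among those components with at most one non-$r_0$-block in total, it suffices to treat the connected case and then apply the argument below to each component. For the base case $k=1$ the digraph $G$ is a single block $B_1$, so $r(G)=r(B_1)$ holds trivially, regardless of whether $B_1$ happens to be an $r_0$-block.

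For the inductive step, assume the claim for every digraph with no loops at its cut-vertices, fewer than $k$ blocks, and at most one non-$r_0$-block, and let $G$ be connected with $k\geq 2$ blocks. The first task is to locate a pendant block that is an $r_0$-block. Because the block-cut tree of a connected digraph with $k\geq 2$ blocks has at least two leaves, $G$ has at least two pendant blocks; since at most one of the $k$ blocks of an $r_0$-digraph fails to be an $r_0$-block, at least one pendant block $B$ is an $r_0$-block. Let $v$ be the unique cut-vertex of $G$ lying in $B$.

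Next I would apply Theorem \ref{r_0} with $H=B$. All hypotheses hold: $B$ is a nonempty induced subdigraph containing $v$; since $B$ is pendant there is no arc between $V(B\setminus v)$ and $V(G\setminus B)$; $r(B)=r(B\setminus v)$ because $B$ is an $r_0$-block and $v$ is the only cut-vertex of $G$ in $B$; and condition (1) is met since $G$ has no loop at the cut-vertex $v$, that is, $\alpha=0$. Writing $G':=G\setminus(B\setminus v)$, Theorem \ref{r_0} gives
$$r(G)=r(B\setminus v)+r\big(G\setminus(B\setminus v)\big)=r(B)+r(G').$$
Deleting the noncut-vertices $V(B\setminus v)$ leaves $v$ together with all of $G$ outside $B$, so $G'$ is connected with exactly $k-1$ blocks, namely the blocks of $G$ other than $B$; each such block, being disjoint from $V(B\setminus v)$, is unchanged as an induced subdigraph and retains its rank. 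As $B$ was an $r_0$-block and at most one block of $G$ was not, at least $(k-1)-1$ of the blocks of $G'$ are $r_0$-blocks, so $G'$ is again an $r_0$-digraph; and since the cut-vertices of $G'$ form a subset of those of $G$, it still has no loops at its cut-vertices. The induction hypothesis then yields $r(G')=\sum_{B_i\neq B}r(B_i)$, whence $r(G)=r(B)+\sum_{B_i\neq B}r(B_i)=\sum_{i=1}^{k}r(B_i)$.

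The main obstacle is bookkeeping rather than any analytic difficulty. One must ensure that a pendant $r_0$-block always exists, which follows from combining the two-leaves property of the block-cut tree with the at-most-one-bad-block count, and that the reduced digraph $G'$ genuinely inherits both the $r_0$-digraph structure and the absence of loops at its cut-vertices so that the induction applies. Checking that the surviving blocks keep their ranks (they are untouched by the deletion) and that no cut-vertex carrying a loop is created is the crux of the verification; once these structural facts are secured, Theorem \ref{r_0} supplies all the quantitative content.
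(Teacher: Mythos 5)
Your proof is correct and follows essentially the same route as the paper: induction on the number of blocks, locating a pendant $r_0$-block (which exists because there are at least two pendant blocks and at most one non-$r_0$-block), peeling it off via Theorem \ref{r_0} with the $\alpha=0$ hypothesis, and applying the induction hypothesis to $G\setminus(B\setminus v)$. Your additional bookkeeping — reducing to the connected case and verifying that the reduced digraph inherits the $r_0$-digraph property and the no-loops-at-cut-vertices condition — only makes explicit what the paper leaves implicit.
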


\begin{proof}
 We prove the result by induction on the number $k$ of blocks. For $k=1$ the result is trivial. Suppose the result holds for $k$. Consider a digraph having $k+1$ blocks. Since $G$ has at least two pendant blocks, there exists a pendant block, without loss of generality $B_1$ which is an $r_0$-block with the cut-vertex $v$. Then using Theorem \ref{r_0} $$r(G)=r(B_1\setminus v)+r(G\setminus (B_1\setminus v))=r(B_1)+r(G\setminus (B_1\setminus v)).$$ Since, $G\setminus (B_1\setminus v)$ has $k$ blocks, all of them, except for at most one, $r_0$-blocks, by the induction hypothesis $$r\bigg(G\setminus (B_1\setminus v)\bigg)=\sum_{i=2}^{k+1} r(B_i).$$ Hence $r(G)=\sum_{i=1}^{k+1}r(B_i),$ which proves the result.
\end{proof}

\begin{corollary}
Let $G$ be a biblock graph having $k$ blocks. If each block has at least two noncut-vertices in different partition sets, then $r(G)=2k.$
\end{corollary}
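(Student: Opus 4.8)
The plan is to exhibit $G$ as an $r_0$-digraph with no loops at its cut-vertices and then invoke Theorem \ref{r0f}, which reduces the computation to summing the ranks of the individual blocks. Since a biblock graph is a simple graph, viewed here as a loopless symmetric digraph, it has no loops at all, so the loop hypothesis of Theorem \ref{r0f} is automatically satisfied. It therefore suffices to verify two things: that every block has rank $2$, and that $G$ is an $r_0$-digraph.

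For the rank of a block, write $B_i=K_{p,q}$ with partition sets $X$ and $Y$, where $p=|X|\geq 1$ and $q=|Y|\geq 1$. In its adjacency matrix the columns indexed by $X$ are all equal to one fixed vector and the columns indexed by $Y$ are all equal to another, and these two vectors are linearly independent; hence the column space is two-dimensional and $r(B_i)=2$. To show $G$ is an $r_0$-digraph I would in fact prove the stronger statement that \emph{every} block is an $r_0$-block (the definition asks only for $k-1$ of them). Fix a block $B_i=K_{p,q}$ and a cut-vertex $v$ of $G$ lying in $B_i$, say $v\in X$. The hypothesis that $B_i$ has noncut-vertices in different partition sets supplies a noncut-vertex $w\in X$, and $w\neq v$ because $v$ is a cut-vertex; thus $|X|\geq 2$, so $X\setminus\{v\}$ is still nonempty while $Y$ is untouched. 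Consequently $B_i\setminus v=K_{p-1,q}$ with both sides nonempty, so by the same count $r(B_i\setminus v)=2=r(B_i)$, which is exactly the $r_0$-block condition.

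The delicate point, and where the hypothesis does its real work, is ensuring that deleting a cut-vertex does not empty its partition set. If removing $v$ left $X$ empty, then $B_i\setminus v$ would be edgeless and its rank would drop to $0$, destroying the equality $r(B_i)=r(B_i\setminus v)$ and hence the $r_0$-block property. The requirement of two noncut-vertices in different partition sets is precisely the safeguard against this degeneracy, and it is the only genuine obstacle in the argument. With every block verified to be an $r_0$-block, $G$ is an $r_0$-digraph with no loops at its cut-vertices, and Theorem \ref{r0f} yields
$$r(G)=\sum_{i=1}^{k}r(B_i)=\sum_{i=1}^{k}2=2k,$$
as claimed.
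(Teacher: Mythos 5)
Your proof is correct and follows the same route as the paper: verify that the hypothesis makes every block an $r_0$-block (with each $K_{p,q}$ block and its vertex-deleted subgraphs having rank $2$), then apply Theorem \ref{r0f}. The paper states this in one line; you have simply supplied the details, including the key observation that the ``noncut-vertices in different partition sets'' hypothesis is exactly what prevents a partition class from being emptied when a cut-vertex is deleted.
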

\begin{proof}
Note that such a biblock graph $G$ is an $r_0$-digraph. Hence the result follows from the Theorem \ref{r0f}.
\end{proof}

\section{CASE III}\label{secr1}
Let $H$ be any subdigraph of a digraph $G$ having a cut-vertex $v$ with a loop of weight $\alpha$. By $v[\alpha \ H]_{in}$ we denote the column vector consisting of the weight of the loop at $v$ and weights of all the incoming edges from $H$ to $v$. Similarly, by $v[\alpha \ H]_{out}$ we denote the column vector consisting of the weight of the loop at $v$ and the weights of all the outgoing edges from $v$ to $H$.

\begin{theorem}
  Let $G$ be a digraph with a cut-vertex $v$. Let $H$ be a nonempty induced  subdigraph which includes $v$ such that there is no arc $(p,q)$ or $(q,p)$, where $p\in V(H\setminus v)$ and  $ q \in V(G\setminus H)$. If $r(H)=r(H\setminus v)+1$, then \begin{enumerate}
\item If $v[H\setminus v]_{in} \in cs(A(H\setminus v))$,
$$ r(G)=\begin{cases}
r(H\setminus v)+ r(G\setminus H)+1 & \text{if $v[G\setminus H]_{in}\in cs(A(G\setminus H)),$ }\\ r(H\setminus v)+ r(G\setminus H)+2 & \text{if $v[G\setminus H]_{in} \notin cs(A(G\setminus H)).$ }
\end{cases}$$

\item If $v[H\setminus v]_{out} \in rs(A(H\setminus v))$, $$ r(G)=\begin{cases}
r(H\setminus v)+ r(G\setminus H)+1 & \text{if $v[G\setminus H]_{out}\in rs(A(G\setminus H)),$ }\\ r(H\setminus v)+ r(G\setminus H)+2 & \text{if $v[G\setminus H]_{out}\notin rs(A(G\setminus H)).$ }
\end{cases}$$

 \end{enumerate}
\end{theorem}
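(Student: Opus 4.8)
The plan is to reduce everything to Theorem \ref{hy}. Ordering the vertices of $G$ so that $v$ comes first, then $V(H\setminus v)$, then $V(G\setminus H)$, and using that there is no arc between $H\setminus v$ and $G\setminus H$, we may write
$$A(G)=\begin{bmatrix} \alpha & x^T & w^T \\ y & A(H\setminus v) & O \\ z & O^T & A(G\setminus H)\end{bmatrix},$$
exactly as in the proof of Theorem \ref{+2}, where $x^T=v[H\setminus v]_{out}$, $w^T=v[G\setminus H]_{out}$, $y=v[H\setminus v]_{in}$, $z=v[G\setminus H]_{in}$, and $\alpha$ is the loop weight at $v$. Setting $B=\begin{bmatrix} A(H\setminus v) & O \\ O^T & A(G\setminus H)\end{bmatrix}$, we have $r(B)=r(H\setminus v)+r(G\setminus H)$, and $A(G)$ is precisely the bordered matrix of Theorem \ref{hy} with appended row $[x^T\ w^T]$, appended column $\begin{bmatrix} y \\ z\end{bmatrix}$, and scalar $\alpha$.

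First I would record how the two membership tests of Theorem \ref{hy} split along the block-diagonal form of $B$: a row $[a^T\ b^T]$ lies in $rs(B)$ iff $a^T\in rs(A(H\setminus v))$ and $b^T\in rs(A(G\setminus H))$, and dually for $cs(B)$. For part (1) the hypothesis $v[H\setminus v]_{in}=y\in cs(A(H\setminus v))$ then makes the appended column satisfy $\begin{bmatrix} y\\ z\end{bmatrix}\in cs(B)$ if and only if $v[G\setminus H]_{in}=z\in cs(A(G\setminus H))$. So if $z\notin cs(A(G\setminus H))$ the appended column lies outside $cs(B)$, and Theorem \ref{hy}(1) yields $r(G)=r(B)+2$; if $z\in cs(A(G\setminus H))$ the appended column lies in $cs(B)$, so by Theorem \ref{hy} the rank is either $r(B)$ or $r(B)+1$, and the assertion is that it equals $r(B)+1$. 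Part (2) is the mirror image: there the hypothesis $v[H\setminus v]_{out}=x^T\in rs(A(H\setminus v))$ controls the appended row, $[x^T\ w^T]\in rs(B)$ iff $w^T\in rs(A(G\setminus H))$, and one reads off the same dichotomy from Theorem \ref{hy} with the roles of $cs$ and $rs$ interchanged.

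The main obstacle is exactly the second branch, where the appended column (resp. row) already lies in $cs(B)$ (resp. $rs(B)$): there one must still rule out the rank-preserving outcome of Theorem \ref{hy}(2) and the $+1$ fallback of Theorem \ref{hy}(3a) to land on the claimed value. This is where the hypothesis $r(H)=r(H\setminus v)+1$ has to be spent in full. Applying Theorem \ref{hy} to $A(H)$ alone, the equality $r(H)=r(H\setminus v)+1$ certifies that the $v$-border of $H$ is not absorbed into $A(H\setminus v)$ (and, via Remark \ref{rmk1}, isolates the role of $\alpha$); I would transport precisely this non-absorption to the full border of $A(G)$. The genuinely delicate configuration is when $v[H\setminus v]_{out}\in rs(A(H\setminus v))$ and $v[H\setminus v]_{in}\in cs(A(H\setminus v))$ hold simultaneously, i.e. the hypotheses of parts (1) and (2) both apply: then neither border is transverse to $B$, the loop weight $\alpha$ couples to $w^T$ and $z$, and separating $r(B)$ from $r(B)+1$ appears to require the loop-weight conditions used in Theorem \ref{r_0}. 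I would therefore first settle the transverse sub-cases (exactly one of the two memberships holding) cleanly from Theorem \ref{hy}, and isolate the doubly-aligned configuration as the step needing the extra loop analysis.
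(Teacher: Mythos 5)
Your setup is exactly the paper's: put $v$ first, take $B=\begin{bmatrix} A(H\setminus v) & O\\ O^T & A(G\setminus H)\end{bmatrix}$, and feed the bordered matrix to Theorem \ref{hy} (the paper phrases this with elementary operations, but it is the same reduction), and your block-diagonal splitting of the membership tests is correct. The gap is that you stop one observation short, and as a result you both mis-cite Theorem \ref{hy}(1) in the first branch and leave the second branch open. Theorem \ref{hy}(1) gives $r(G)=r(B)+2$ only when the appended row \emph{and} the appended column lie outside $rs(B)$ and $cs(B)$ respectively; you verify only the column. The missing step is to apply Theorem \ref{hy}(3) to $A(H)$ itself: $r(H)=r(H\setminus v)+1$ together with $y=v[H\setminus v]_{in}\in cs(A(H\setminus v))$ forces either $x^T=v[H\setminus v]_{out}\notin rs(A(H\setminus v))$ (case 3(b)) or the doubly-aligned case 3(c). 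In the former, $[x^T\ w^T]\notin rs(B)$ automatically, whatever $w^T$ is; then the first branch is a legitimate application of Theorem \ref{hy}(1), and the second branch --- which you flag as ``the main obstacle'' requiring the hypothesis to be ``spent in full'' --- is immediate from Theorem \ref{hy}(3b): row outside, column inside gives exactly $r(B)+1$. No further analysis and no loop-weight argument is needed; part (2) is the transpose.

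What remains is case 3(c), and there your suspicion is justified, but the situation is worse than ``needs extra analysis'': the statement is simply false there. Take $A(H)=\begin{bmatrix}0&1\\1&1\end{bmatrix}$ ($v$ first, so $H\setminus v$ is a single vertex with a loop of weight $1$), and let $G\setminus H$ be a single loopless vertex $t$ with one arc of weight $1$ from $t$ into $v$. Then $r(H)=2=r(H\setminus v)+1$, $v[H\setminus v]_{in}\in cs(A(H\setminus v))$, and $v[G\setminus H]_{in}\notin cs(A(G\setminus H))$, so part (1) predicts $r(G)=3$; but $A(G)$ has rank $2$. The paper's own proof silently restricts to the transverse case (its first line for part (1) adds ``but $v[H\setminus v]_{out}\notin rs(A(H\setminus v))$'') and relegates the doubly-aligned configuration to Theorem \ref{lt}, so quarantining that case is the right move --- but it must be recorded as an additional hypothesis of the theorem, not as a pending step of its proof.
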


\begin{proof}  With suitable relabelling of the vertices, we can write
$$A(G)=\begin{bmatrix}
A(H\setminus v) & x & O\\ y^T & \alpha & w^T \\ O^T & z & A(G \setminus H)
\end{bmatrix}.$$
\begin{enumerate}
\item  $v[H\setminus v]_{in}\in cs(A(H\setminus v))$, but $v[H\setminus v]_{out}\notin rs(A(H\setminus v))$.

 In this case, the column vector $x \in cs(A(H\setminus v))$, while $y^T \notin rs(A(H\setminus v))$.
Using elementary operations $A(G)$ can be transformed to the matrix $$\begin{bmatrix}
A(H\setminus v) & 0 & O\\ y^T &  \hat{\alpha} & w^T \\ O^T & z & A(G \setminus H)
\end{bmatrix}.$$
Notice that the matrix
$$\begin{bmatrix}
A(H\setminus v) & 0 & O\\ y^T & \hat{\alpha} & w^T \end{bmatrix},$$ has rank $1+r(H\setminus v)$, and its rows are not in the row space of the matrix
$$\begin{bmatrix}
O^T & z & A(G \setminus H)
\end{bmatrix}.$$

If $z \in cs(A(G\setminus H)),$ then the rank of above matrix is $r(G\setminus H)$ otherwise $1+r(G\setminus H)$. Hence the result follows.

\item $v[H\setminus v]_{out}\in rs(A(H\setminus v))$, but $v[H\setminus v]_{in}\notin cs(A(H\setminus v))$.

In this case, the row $y^T\in rs(A(H\setminus v))$, while $x\notin cs(A(H\setminus v))$.
Using elementary operations $A(G)$ can be transformed to the matrix $$\begin{bmatrix}
A(H\setminus v) & x & O\\ 0 &  \hat{\alpha} & w^T \\ O^T & z & A(G \setminus H)
\end{bmatrix}.$$
Similar to the Case I, if $w^T\in rs(A(G\setminus H)),$ then $r(G)=r(H\setminus v)+r(G\setminus  H)+1,$ else $r(G)=r(H\setminus v)+r(G\setminus  H)+2.$
\end{enumerate} This proves the result.
\end{proof}

\begin{theorem}\label{lt}
 Let $G$ be a digraph with a cut-vertex $v$. Let $H$ be a nonempty induced  subdigraph which includes $v$ such that there is no arc $(p,q)$ or $(q,p)$, where $p\in V(H\setminus v)$ and  $ q \in V(G\setminus H)$.  If $r(H)=r(H\setminus v)+1$, and
if $v[H\setminus v]_{out} \in rs(A(H\setminus v))$ and $v[H\setminus v]_{in} \in cs(A(H\setminus v)),$ but $v[H\setminus v \ v]_{out} \notin rs([A(H\setminus v)\ v[H\setminus v]_{in}])$ and $v[H\setminus v \ v]_{in} \notin cs([A^T(H\setminus v) \ v[H\setminus v]^T_{out}]^T).$ Then

$$ r(G)=\begin{cases} \begin{aligned}
r(H\setminus v)+ r(G\setminus H)+1 & \text{ if either $v[G\setminus H]_{out}\notin rs(A(G\setminus H))$ or $v[G\setminus H]_{in} \notin cs(A(G\setminus H)),$}\\

 r(H\setminus v)+ r(G\setminus H)+2 & \text{ if $v[G\setminus H]_{out} \notin rs(A(G\setminus H))$ and $v[G\setminus H]_{in} \notin cs(A(G\setminus H)),$} \\

  r(H\setminus v)+ r(G\setminus H)+1 & \text{ if $v[G\setminus H]_{out} \in rs(A(G\setminus H))$ and $v[G\setminus H]_{in} \in cs(A(G\setminus H))$, but} \\ & \text{ $[ \hat{\alpha} \ v[G\setminus H]_{out}] \notin rs(A(G\setminus H))$ and $[\hat{\alpha} \ v[G\setminus H]_{in}]^T\notin cs(A(G\setminus H))$}.  
\end{aligned}
\end{cases}$$
\end{theorem}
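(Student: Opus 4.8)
The plan is to reduce $A(G)$ by the same scheme used in the proof of Theorem~\ref{r_0}, and then invoke Theorem~\ref{hy} a second time on the residual corner block. With a suitable relabelling,
$$A(G)=\begin{bmatrix}
A(H\setminus v) & x & O\\ y^T & \alpha & w^T \\ O^T & z & A(G \setminus H)
\end{bmatrix},$$
where $x=v[H\setminus v]_{in}$, $y^T=v[H\setminus v]_{out}$, $w^T=v[G\setminus H]_{out}$, $z=v[G\setminus H]_{in}$, and $\alpha$ is the loop weight at $v$. First I would use the two membership hypotheses. Since $v[H\setminus v]_{in}\in cs(A(H\setminus v))$, pick $c$ with $A(H\setminus v)c=x$ and subtract the corresponding combination of the first block of columns from the $v$-column: this replaces $x$ by $0$ and $\alpha$ by $\hat\alpha:=\alpha-y^Tc$, while leaving $z$ untouched because the lower-left block is $O^T$. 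Symmetrically, since $v[H\setminus v]_{out}\in rs(A(H\setminus v))$, pick $d$ with $d^TA(H\setminus v)=y^T$ and subtract from the $v$-row to clear $y^T$ without disturbing $w^T$, since the upper-right block is $O$. The essential bookkeeping point is that these operations act only on the $H$-part, so $z$, $w^T$ and $A(G\setminus H)$ survive verbatim.

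The result is
$$\begin{bmatrix}
A(H\setminus v) & 0 & O\\ 0 & \hat\alpha & w^T \\ O^T & z & A(G \setminus H)
\end{bmatrix},$$
which is block diagonal with summands $A(H\setminus v)$ and $N:=\begin{bmatrix}\hat\alpha & w^T\\ z & A(G\setminus H)\end{bmatrix}$, giving $r(G)=r(H\setminus v)+r(N)$. Before exploiting this I must establish $\hat\alpha\neq0$. Applying the very same two operations to $A(H)=\begin{bmatrix}A(H\setminus v)&x\\ y^T&\alpha\end{bmatrix}$ in isolation yields $\mathrm{diag}\bigl(A(H\setminus v),\hat\alpha\bigr)$, whose rank is $r(H\setminus v)$ if $\hat\alpha=0$ and $r(H\setminus v)+1$ if $\hat\alpha\neq0$; since the CASE~III hypothesis forces $r(H)=r(H\setminus v)+1$, we must have $\hat\alpha\neq0$. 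This is precisely subcase 3(c) of Theorem~\ref{hy} for the pair $(A(H\setminus v),\alpha)$: both $x\in cs(A(H\setminus v))$ and $y^T\in rs(A(H\setminus v))$ hold, yet the augmented vectors escape the augmented row and column spaces, so the unit rank increase of $H$ can only be carried by a nonzero effective loop weight $\hat\alpha$.

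Finally I would apply Theorem~\ref{hy} to $N$ under the identification $\alpha\mapsto\hat\alpha$, $x^T\mapsto w^T=v[G\setminus H]_{out}$, $y\mapsto z=v[G\setminus H]_{in}$, $B\mapsto A(G\setminus H)$, writing $r(N)=r(A(G\setminus H))+\delta$. Part~(1) gives $\delta=2$ exactly when both $v[G\setminus H]_{out}\notin rs(A(G\setminus H))$ and $v[G\setminus H]_{in}\notin cs(A(G\setminus H))$, the $+2$ branch; parts 3(a)--(b) give $\delta=1$ when exactly one of these memberships fails, the first $+1$ branch; and part 3(c) gives $\delta=1$ in the remaining listed case, where both vectors lie in the respective spaces but the augmented vectors $[\hat\alpha\ w^T]$ and $[\hat\alpha\ z]^T$ do not, the third branch. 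Substituting into $r(G)=r(H\setminus v)+r(N)$ yields the three stated values. I expect the main difficulty to be one of care rather than depth: confirming $\hat\alpha\neq0$ (equivalently, that we are genuinely in subcase 3(c) rather than the $+0$ case of Theorem~\ref{hy}) and verifying that the column/row reduction leaves the $G\setminus H$ data intact, so that the second application of Theorem~\ref{hy} sees $N$ with the correct scalar $\hat\alpha$ in its corner.
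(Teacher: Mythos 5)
Your proposal is correct and follows essentially the same route as the paper: reduce $A(G)$ by row and column operations supported on the $H\setminus v$ block to the form $\bigl[\begin{smallmatrix} A(H\setminus v) & 0 & O\\ 0 & \hat\alpha & w^T\\ O^T & z & A(G\setminus H)\end{smallmatrix}\bigr]$ and read off the rank of the residual corner block. The paper stops at ``the results are obvious,'' whereas you supply the missing details (the operations leave $z$, $w^T$, $A(G\setminus H)$ intact; $\hat\alpha\neq 0$ because $r(H)=r(H\setminus v)+1$; and the case split comes from applying Theorem~\ref{hy} to $N$), which is exactly what a complete version of the paper's argument requires.
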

\begin{proof}

In this case the matrix $A(G)$ can be transformed to the following matrix using elementary operations
$$\begin{bmatrix}
A(H\setminus v) & 0 & O\\ 0 &  \hat{\alpha} & w^T \\ O^T & z & A(G \setminus H)
\end{bmatrix},$$
from which the results are obvious.
\end{proof}

 Simple block graphs are typical examples, where the pendant blocks satisfy the condition of Theorem \ref{lt}, hence their ranks can be recursively found using Theorem \ref{lt} \cite{singh2018nonsingular}.

\bibliographystyle{plain}
\bibliography{RN}

\begin{thebibliography}{10}

\bibitem{Acharya}
B~Devadas Acharya.
\newblock Spectral criterion for cycle balance in networks.
\newblock {\em Journal of Graph Theory}, 4(1):1--11, 1980.

\bibitem{bapat2011note}
RB~Bapat.
\newblock A note on singular line graphs.
\newblock {\em Bull. Kerala Math. Assoc}, 8(2), 2011.

\bibitem{berman2008upper}
Avi Berman, Shmuel Friedland, Leslie Hogben, Uriel~G Rothblum, and Bryan
  Shader.
\newblock An upper bound for the minimum rank of a graph.
\newblock {\em Linear Algebra and its Applications}, 429(7):1629--1638, 2008.

\bibitem{cvetkovic1980spectra}
Drago{\v{s}}~M Cvetkovic, Michael Doob, and Horst Sachs.
\newblock Spectra of graphs, volume 87 of pure and applied mathematics, 1980.

\bibitem{cvetkovic1972algebraic}
Drago{\v{s}}~M Cvetkovi{\'c} and Ivan~M Gutman.
\newblock The algebraic multiplicity of the number zero in the spectrum of a
  bipartite graph.
\newblock {\em Matemati{\v{c}}ki Vesnik}, 9(56):141--150, 1972.

\bibitem{fan2009nullity}
Yi-Zheng Fan and Ke-Shi Qian.
\newblock On the nullity of bipartite graphs.
\newblock {\em Linear Algebra and its Applications}, 430(11-12):2943--2949,
  2009.

\bibitem{fiorini2005trees}
Stanley Fiorini, Ivan Gutman, and Irene Sciriha.
\newblock Trees with maximum nullity.
\newblock {\em Linear algebra and its applications}, 397:245--251, 2005.

\bibitem{gong2010nullity}
Shi-Cai Gong, Yi-Zheng Fan, and Zhi-Xiang Yin.
\newblock On the nullity of graphs with pendant trees.
\newblock {\em Linear Algebra and its Applications}, 433(7):1374--1380, 2010.

\bibitem{gong2012nullity}
Shi-Cai Gong and Guang-Hui Xu.
\newblock On the nullity of a graph with cut-points.
\newblock {\em Linear algebra and its applications}, 436(1):135--142, 2012.

\bibitem{gutman2011nullity}
Ivan Gutman and Bojana Borovicanin.
\newblock Nullity of graphs: an updated survey.
\newblock {\em Selected topics on applications of graph spectra, Math. Inst.,
  Belgrade}, pages 137--154, 2011.

\bibitem{gutman2001nullity}
Ivan Gutman and Irene Sciriha.
\newblock On the nullity of line graphs of trees.
\newblock {\em Discrete Mathematics}, 232(1-3):35--45, 2001.

\bibitem{hu2008nullity}
Shengbiao Hu, Tan Xuezhong, and Bolian Liu.
\newblock On the nullity of bicyclic graphs.
\newblock {\em Linear Algebra and its Applications}, 429(7):1387--1391, 2008.

\bibitem{lee1994chemical}
Shyi-Long Lee and Chiuping Li.
\newblock Chemical signed graph theory.
\newblock {\em International journal of quantum chemistry}, 49(5):639--648,
  1994.

\bibitem{leskovec2010signed}
Jure Leskovec, Daniel Huttenlocher, and Jon Kleinberg.
\newblock Signed networks in social media.
\newblock In {\em Proceedings of the SIGCHI conference on human factors in
  computing systems}, pages 1361--1370. ACM, 2010.

\bibitem{nath2007null}
Milan Nath and Bhaba~Kumar Sarma.
\newblock On the null-spaces of acyclic and unicyclic singular graphs.
\newblock {\em Linear Algebra and its Applications}, 427(1):42--54, 2007.

\bibitem{singh2017mathcal}
Ranveer Singh and Ravindra~B Bapat.
\newblock $b$-partitions, application to determinant and permanent of graphs.
\newblock {\em Transactions on Combinatorics}, 7(3):29--47, 2018.

\bibitem{singh2017characteristic}
Ranveer Singh and RB~Bapat.
\newblock On characteristic and permanent polynomials of a matrix.
\newblock {\em Spec. Matrices}, 5:97--112, 2017.

\bibitem{singh2018nonsingular}
Ranveer Singh, Cheng Zheng, Naomi Shaked-Monderer, and Abraham Berman.
\newblock Nonsingular block graphs: An open problem.
\newblock {\em arXiv preprint arXiv:1803.03947}, 2018.

\bibitem{von1957spektren}
Lothar Von~Collatz and Ulrich Sinogowitz.
\newblock Spektren endlicher grafen.
\newblock In {\em Abhandlungen aus dem Mathematischen Seminar der
  Universit{\"a}t Hamburg}, volume~21, pages 63--77. Springer, 1957.

\bibitem{xuezhong2005nullity}
Tan Xuezhong and Bolian Liu.
\newblock On the nullity of unicyclic graphs.
\newblock {\em Linear algebra and its applications}, 408:212--220, 2005.

\end{thebibliography}
 \end{document}